\newtheorem{theorem}{Theorem}
\newtheorem{assumption}{Assumption}
\newtheorem{proposition}{Proposition}
\newtheorem{remark}{Remark}
\newtheorem{lemma}{Lemma}
\title{Forward equations for option prices in semimartingale models}
\date{Laboratoire de Probabilit\'es et
    Mod\`eles Al\'eatoires,\\
     CNRS - Universit\'e de Paris    VI\\
    and\\
      Columbia University, New York.\\
     \ \\
     First version: Sept  2009. Final revision: 2011.\\
     To appear in: {\it Finance \& Stochastics.}\footnote{We thank Bruno Dupire, Jean Jacod and seminar participants at the EURANDOM Workshop on Statistical Inference for L\'evy Processes, Workshop on PDEs and mathematical finance (KTH 2009), Research in Options (2009),
     Conference on PDEs and Mathematical finance (Rutgers 2009), Bachelier Congress (2010), Columbia University and Universit\'e Paris-Est  for helpful comments. }}
\author{Amel Bentata\  and Rama Cont}
\begin{document}

\maketitle

\begin{abstract}
We derive a forward partial integro-differential equation for
prices of call options in a model where the dynamics of the
underlying asset under the pricing measure is described by a
-possibly discontinuous- semimartingale. This result generalizes
Dupire's forward equation to a large class of non-Markovian models with jumps.
\end{abstract}
\tableofcontents
\newpage

Since the seminal work of Black, Scholes and Merton
\cite{blackscholes,merton73} partial differential equations (PDE)
have been used as a way of characterizing and efficiently computing
option prices. In the Black-Scholes-Merton model and various
extensions of this model which retain the Markov property of the
risk factors, option prices can be characterized in terms of
solutions to a backward PDE, whose variables are time (to maturity)
and the value of the underlying asset. The use of backward PDEs for
option pricing has been extended to cover options with
path-dependent and early exercise features, as well as to
multifactor models (see e.g. \cite{achdoupironneau}).
 When the underlying asset exhibit jumps, option prices can be computed by solving an analogous partial integro-differential equation (PIDE) \cite{andersen,cont05}.

A second important step was taken by  Dupire \cite{dupire93,dupire94,dupire97} 
who showed that when the underlying asset is assumed to follow a diffusion process
$$ dS_t= S_t \sigma(t,S_t) dW_t $$ prices of call options (at a given date $t_0$) solve a {\it forward} PDE
$$
  \frac{\partial C_{t_0}}{\partial
T}(T,K) = -r(T)K\frac{\partial C_{t_0}}{\partial K}(T,K)+\frac{K^2\sigma(T,K)^2}{2}\, \frac{\partial^2 C_{t_0}}{\partial
K^2}(T,K)
$$
 on $[t_0,\infty[\times ]0,\infty[$ in the strike and maturity  variables, with
the initial condition $$\forall K>0\quad C_{t_0}(t_0,K)= (S_{t_0}-K)_+.$$
 This forward equation allows to price call options with various strikes and maturities on the same underlying asset, by solving a {\it single} partial differential equation. Dupire's forward equation also provides useful insights into the {\it inverse problem} of calibrating  diffusion models to observed call and put option prices \cite{busca02}.

Given the theoretical and computational usefulness of the forward equation, there have been various attempts to extend Dupire's forward equation to other types of options and processes, most notably to Markov processes with jumps \cite{andersen,carrhirsa,contsavescu08,jourdain07,carrgeman04}. Most of these constructions use the Markov property of the underlying process in a crucial way (see however \cite{klebaner02}).

As   noted by Dupire \cite{dupire96}, the forward PDE holds in a more general context than the backward PDE: even if the (risk-neutral) dynamics of the underlying asset is not necessarily Markovian, but described by a continuous Brownian martingale
 $$ dS_t= S_t\delta_t dW_t, $$
 then call options still verify a forward PDE where the diffusion coefficient is given by the local (or effective) volatility function $\sigma(t,S)$ given by
$$ \sigma(t,S)=\sqrt{E[\delta_t^2|S_t=S]}. $$
This method is linked to the ``Markovian projection" problem: the
construction of a Markov process which mimicks the marginal distributions of a martingale \cite{bentatacont09,gyongy86,madanyor02}. Such ``mimicking processes" provide a method to extend the Dupire equation to non-Markovian settings. 

   We show in this work that the forward equation for call prices holds in a more general setting, where the dynamics of the underlying asset is described by a -- possibly
    discontinuous -- semimartingale. Our parametrization of the price dynamics is general, allows for stochastic volatility and does {\it not} assume jumps to be independent or driven by a L\'evy process, although it includes these cases.
 Also, our derivation  does not require ellipticity or non-degeneracy of the
diffusion coefficient. The result is thus applicable to various stochastic volatility models with jumps, pure jump models and point
process models used in equity and credit risk modeling.

Our result extends the forward equation from the original diffusion setting of Dupire \cite{dupire94} to  various examples of non-Markovian and/or discontinuous processes and implies previous derivations of forward equations  \cite{andersen,carrhirsa,carrgeman04,contsavescu08,dupire94,dupire96,jourdain07,lopatin08} as  special cases.
Section \ref{ch3.examples.sec} gives  examples of  forward PIDEs  obtained in various
settings:  time-changed L\'{e}vy processes, local L\'evy models and point processes used in portfolio default risk modeling.
In the case where the underlying risk factor follows, an It\^o process or a Markovian jump-diffusion driven by a L\'evy process, we retrieve previously known forms of the forward equation. In this case, our approach gives a rigorous derivation of these results under precise assumptions in a unified framework.
In some cases, such as index options (Sec. \ref{ch3.index.sec}) or CDO expected tranche notionals (Sec. \ref{ch3.forwardcdo.sec}), our method leads to a new,  more general form of the forward equation valid for a larger class of models than previously studied \cite{avellaneda03,contsavescu08,schonbucher05}.

The forward equation for call options is a PIDE in one (spatial) dimension, regardless of the number of factors driving the underlying asset. It may thus be used as a method for reducing the dimension of the problem.
The case of index options (Section \ref{ch3.index.sec}) in  a multivariate jump-diffusion model illustrates how the forward equation projects a high dimensional pricing problem into a one-dimensional state equation.

\section{Forward PIDEs for call options}\label{forwardpide.sec}
\subsection{General formulation of the forward equation}
Consider a (strictly positive)  semimartingale $S$ whose dynamics under the pricing measure
$\mathbb{P}$ is given by
\begin{equation}\label{ch3.stochmodel}
S_T = S_0 +\int_0^T r(t) S_{t^-} dt + \int_0^T S_{t^-}\delta_t dW_t + \int_0^T\int_{-\infty}^{+\infty} S_{t^-}(e^y - 1) \tilde{M}(dt\ dy),
\end{equation}
where  $r(t)>0$ represents a (deterministic) bounded discount rate,
$\delta_t$ the (random) volatility process and $M$ is an integer-valued
random measure with compensator
$$\mu(dt\,dy; \omega)=m(t,dy,\omega)\,dt,$$
representing jumps in the log-price, and $\tilde{M}=M-\mu$ is the compensated random measure associated to $M$
(see \cite{contankov} for further background).
Both the volatility $\delta_t$ and  $m(t,dy)$, which represents the intensity of jumps of size $y$ at time $t$, are allowed to be stochastic. In particular, we do {\it not} assume the jumps to be driven by a L\'evy process or a process with independent increments.
The specification \eqref{ch3.stochmodel} thus includes most stochastic volatility models with jumps.

We assume the following  conditions:
\begin{assumption}[Full support]\label{ch3.fullsupport.cond}
  $\forall t\geq 0$, ${\rm supp}(S_{t})=[0,\infty[$.
\end{assumption}
\begin{assumption}[Integrability condition]\label{ch3.integrability.cond}
\begin{equation}\tag{H}\label{ch3.H}
\forall T>0, \quad
\mathbb{E}\left[\exp{\left(\frac{1}{2}\int_0^T\delta_t^2\,dt
      + \int_0^T dt \int_{\mathbb{R}} (e^y-1)^2 m(t,dy)\right)}\right]<\infty.\quad
\end{equation}
\end{assumption}
The value $C_{t_0}(T,K)$ at  $t_0$ of a call option with expiry $T>t_{0}$
and strike $K>0$ is given by
\begin{equation}\label{ch3.def.call}
C_{t_0}(T,K)=e^{-\int_{t_0}^T r(t)\,dt}E^{\mathbb{P}}[\max(S_T-K,0)|\mathcal{F}_{t_0}].
\end{equation}
As argued in Section \ref{ch3.tanaka.sec}, under Assumption \eqref{ch3.H},
the expectation in \eqref{ch3.def.call} is finite. Our main result is the following:
\begin{theorem}[Forward PIDE for call options]\label{ch3.pide.forward.prop}

Let $\psi_t$  be the exponential double tail of the compensator $m(t,dy)$
\begin{equation}
  \psi_{t}(z)=
  \begin{cases}
&\int_{-\infty}^z dx\  e^x \int_{-\infty}^x m(t,du), \quad z<0\ ;\\
&\int_{z}^{+\infty} dx\  e^x \int_x^{\infty} m(t,du),\quad z>0\\
\end{cases}
\end{equation}
and let $\sigma:[t_0,T]\times \mathbb{R}^+-\{0\}\mapsto \mathbb{R}^+$, $\chi:[t_0,T]\times \mathbb{R}^+-\{0\}\mapsto \mathbb{R}^+$ be measurable functions such that for all $t\in[t_0,T]$
\begin{equation}\label{ch3.new.para}
\begin{cases}
 \sigma(t,S_{t-})&=\sqrt{\mathbb{E}\left[\delta_t^2|S_{t^-}\right]};\\
 \chi_{t,S_{t-}}(z)&=\mathbb{E}\left[\psi_t\left(z\right)|S_{t-}\right]\qquad a.s.
\end{cases}
\end{equation}
Under assumption \eqref{ch3.H}, the call
option price $(T,K)\mapsto C_{t_0}(T,K)$, as a function of maturity and
 strike, is a solution (in the sense of distributions) of the partial
integro-differential equation
\begin{eqnarray}
  \frac{\partial C_{t_0}}{\partial
T}(T,K) = -r(T)K\frac{\partial C_{t_0}}{\partial K}(T,K)+\frac{K^2\sigma(T,K)^2}{2}\, \frac{\partial^2 C_{t_0}}{\partial
K^2}(T,K)\nonumber\\
 +\int_{0}^{+\infty} y\frac{\partial^2 C_{t_0}}{\partial K^2}(T,dy)\,\chi_{T,y}\left(\ln{\left(\frac{K}{y}\right)}\right)
\label{ch3.pide.forward.eq}\end{eqnarray}
on $[t_0,\infty[\times ]0,\infty[$ with
the initial condition:
$$\forall K>0,\quad C_{t_0}(t_0,K)= (S_{t_0}-K)_+.$$
\end{theorem}
\begin{remark} Recall that $f: [t_0,\infty[\times ]0,\infty[\mapsto \mathbb{R}$
is a solution of \eqref{ch3.pide.forward.eq} in the sense of distributions on $[t_0,\infty[\times]0,\infty[$ if
for any test function $ \varphi\in
C_0^\infty([t_0,\infty[\times ]0,\infty[,\mathbb{R})$ and for any $T\geq t_0$,
\begin{eqnarray*}
&&\int_{t_0}^Tdt\,\int_0^\infty  dK \varphi(t,K)\,\Big[-\frac{\partial f}{\partial t}
-r(t)K\frac{\partial f}{\partial K}+ \frac{K^2\sigma(t,K)^2}{2}\,
\frac{\partial^2 f}{\partial K^2} \\
&&\quad\quad\quad \quad\quad\quad\quad\quad\quad+\int_{0}^{+\infty}
y\frac{\partial^2 f}{\partial
K^2}(t,dy)\,\chi_{t,y}\left(\ln{\big(\frac{K}{y}\big)}\right)\Big]=0,
\end{eqnarray*}
where $C_0^\infty([t_0,\infty[\times]0,\infty[,\mathbb{R})$ is the set of infinitely
differentiable functions with compact support in $[t_0,\infty[\times]0,\infty[$. This
notion of generalized solution allows to separate the discussion of
existence of solutions from the discussion of their regularity
(which may be delicate, see \cite{cont05}).
\end{remark}
\begin{remark}\label{ch3.novikov.rem} The discounted asset price
 $$\hat{S}_T= e^{-\int_0^T r(t)dt}\,S_T,$$ is the stochastic exponential of the martingale $U$ defined by
$$U_T=\int_0^T \delta_t\,dW_t+\int_0^T\int (e^y-1)\tilde{M}(dt\,dy).$$
Under assumption \eqref{ch3.H}, we have
$$
\forall T>0,\quad\mathbb{E}\left[\exp{\left(\frac{1}{2}\langle U,
U\rangle_T^d + \langle U,U\rangle_T^c\right)}\right]<\infty,
$$
where $\langle U,U\rangle^c$ and $\langle U,U\rangle^d$ denote the continuous and purely discontinuous parts of $[U,U]$. \cite[Theorem 9]{protter08} implies that $(\hat{S}_T)$ is a $\mathbb{P}$-martingale.
\end{remark}
The form of the integral term in \eqref{ch3.pide.forward.eq} may seem
different from the integral term appearing in backward PIDEs
\cite{cont05,schwab09}. The following lemma expresses
$\chi_{T,y}(z)$ in a more familiar form in  terms of call payoffs:
\begin{lemma}\label{ch3.calcul.de.B}
Let $n(t,dz,y,\omega)\,dt$
be a random measure on $[0,T]\times\mathbb{R}\times\mathbb{R}^+$ verifying
$$ \forall t\in[0,T],\quad \int_{-\infty}^\infty (e^z\wedge |z|^2) n(t,dz,y,\omega) < \infty\qquad {\rm a.s.}$$
Then the exponential double tail  $\chi_{t,y}(z)$ of $n$, defined as
\begin{equation}
  \chi_{t,y}(z)=
  \begin{cases}
&\int_{-\infty}^z dx\  e^x \int_{-\infty}^x n(t,du,y), \quad z<0\ ;\\
&\int_{z}^{+\infty} dx\  e^x \int_x^{\infty} n(t,du,y),\quad z>0\\
\end{cases}
\end{equation}
verifies
$$
\int_{\mathbb{R}}[(ye^z-K)^+-e^z(y-K)^+-K(e^z-1)1_{ \{y>K\}}]n(t,dz,y)=y\,\chi_{t,y}\left(\ln{\left(\frac{K}{y}\right)}\right).
$$
\end{lemma}
\begin{proof} Let $K,T >0$. Then
  \begin{eqnarray*}
&&\int_{\mathbb{R}}[(ye^z-K)^+-e^z(y-K)^+-K(e^z-1)1_{ \{y>K\}}]n(t,dz,y) \nonumber\\
&=&\int_{\mathbb{R}}[(ye^z-K)1_{ \{z>\ln{(\frac{K}{y})}\}}-e^z(y-K)1_{\{ y>K\}}-K(e^z-1)1_{ \{y>K\}}]n(t,dz,y)\nonumber\\
&=&\int_{\mathbb{R}}[(ye^z-K)1_{ \{z>\ln{(\frac{K}{y})}\}}+(K-ye^z)1_{\{ y>K\}}]n(t,dz,y).\nonumber
  \end{eqnarray*}
\begin{itemize}
\item If $K\geq y$, then
\begin{eqnarray*}
&&\int_{\mathbb{R}} 1_{ \{K\geq y\}}[(ye^z-K)1_{\{ z>\ln{(\frac{K}{y})}\}}+(K-ye^z)1_{ \{y>K\}}]n(t,dz,y)\nonumber\\
&=&\int_{\ln{(\frac{K}{y})}}^{+\infty} y(e^z-e^{\ln{(\frac{K}{y})}})\,n(t,dz,y).
\end{eqnarray*}
\item If $K<y$, then
\begin{eqnarray*}
&&\int_{\mathbb{R}} 1_{ \{K< y\}}[(ye^z-K)1_{ \{z>\ln{(\frac{K}{y})}\}}+(K-ye^z)1_{ \{y>K \}}]n(t,dz,y)\nonumber\\
&=&\int_{\ln{(\frac{K}{y})}}^{+\infty}[(ye^z-K)+(K-ye^z)]n(t,dz,y)+\int_{-\infty}^{\ln{(\frac{K}{y})}}[K-ye^z]n(t,dz,y)\nonumber\\
&=&\int_{-\infty}^{\ln{(\frac{K}{y})}}y(e^{\ln{(\frac{K}{y})}}-e^z)n(t,dz,y).\nonumber\\
\end{eqnarray*}
\end{itemize}
Using integration by parts, $\chi_{t,y}$  can be equivalently expressed as
\begin{equation}
  \chi_{t,y}(z)=
  \begin{cases}
&\int_{-\infty}^z (e^z-e^u)\,n(t,du,y), \quad z<0\ ;\nonumber\\
&\int_z^{\infty} (e^u-e^z)\,n(t,du,y), \:\quad z>0.\nonumber\\
\end{cases}
\end{equation}
Hence
$$
\int_{\mathbb{R}}[(ye^z-K)^+-e^z(y-K)^+-K(e^z-1)1_{\{ y>K\}}]n(t,dz,y)
=y\,\chi_{t,y}\left(\ln{\left(\frac{K}{y}\right)}\right).
$$
\end{proof}

\subsection{Derivation of the forward equation}\label{ch3.tanaka.sec}
In this section we present a proof of Theorem
\ref{ch3.pide.forward.prop} using the Tanaka-Meyer formula
for semimartingales \cite[Theorem 9.43]{shengchia92}
under assumption \eqref{ch3.H}. 
\begin{proof}
We first note that, by replacing $\mathbb{P}$ by the conditional
measure ${\mathbb{P}}_{|\mathcal{F}_{t_0}}$  given
$\mathcal{F}_{t_0}$, we may replace the conditional expectation in
(\ref{ch3.def.call}) by an expectation with respect to the marginal
distribution $p^S_T(dy)$ of $S_T$ under
${\mathbb{P}}_{|\mathcal{F}_{t_0}}$. Thus, without loss of
generality, we set $t_0=0$ in the sequel and consider the case where
${\mathcal{F}_0}$ is the $\sigma$-algebra generated by all
$\mathbb{P}$-null sets and we denote $C_0(T,K)\equiv C(T,K)$ for
simplicity. (\ref{ch3.def.call}) can be expressed as
\begin{equation}\label{ch3.def.call.bis}
C(T,K)=e^{-\int_0^T r(t)\,dt}\int_{\mathbb{R}^+} \left(y-K\right)^+\,p^S_{T}(dy).
\end{equation}
By differentiating with respect to $K$, we obtain
\begin{equation}\label{ch3.breed}
  \begin{split}
    &\frac{\partial C}{\partial K}(T,K)=-e^{-\int_0^T r(t)\,dt}\int_K^\infty p^S_{T}(dy)=- e^{-\int_0^T r(t)\,dt}\mathbb{E}\left[1_{ \{S_{T}>K\}}\right],\\
    &\frac{\partial^2 C}{\partial K^2}(T,dy)=e^{-\int_0^T r(t)\,dt}p^S_{T}(dy).
  \end{split}
\end{equation}
Let $L^K_t=L^K_t(S)$ be the semimartingale local time of $S$ at $K$ under $\mathbb{P}$ (see \cite[Chapter 9]{shengchia92} or \cite[Ch. IV]{protter}
for definitions).
Applying the Tanaka-Meyer
formula  to $({S}_T-K)^+$, we have
\begin{equation}\label{ch3.tanaka}
\begin{split}
  (S_{T}-K)^+&=(S_0-K)^+ +\int_0^{T} 1_{ \{S_{t-}> K\}} dS_t + \frac{1}{2} (L^K_{T}) \\
  &+ \sum_{0< t\leq T} \left[ (S_{t}-K)^+-(S_{t-}-K)^+-1_{ \{S_{t-}> K\}}\Delta S_{t}\right].
\end{split}
\end{equation}
As noted in Remark \ref{ch3.novikov.rem}, the integrability condition \eqref{ch3.H} implies that the discounted  price
$\hat{S}_t=e^{-\int_0^t r(s)\,ds}S_t={\cal E}(U)_t$ is a
martingale under  $\mathbb{P}$.
So (\ref{ch3.stochmodel}) can be expressed as
$$dS_t=e^{\int_0^t r(s)\,ds}\left(r(t)S_{t-} dt+ d\hat{S}_t\right)$$
and
$$
  \int_0^{T} 1_{ \{S_{t-}> K\}} dS_t = \int_0^{T}e^{\int_0^t r(s)\,ds}\, 1_{ \{S_{t-}> K\}} d\hat{S}_t+
  \int_0^{T} e^{\int_0^t r(s)\,ds}\,r(t) S_{t-} 1_{ \{S_{t-}> K\}} dt,
$$
where the first term is a
martingale. Taking expectations, we obtain
 \begin{eqnarray}
e^{\int_0^{T}r(t)\,dt}C(T,K)-(S_0-K)^+&=&\mathbb{E}\left[\int_0^{T} e^{\int_0^t r(s)\,ds}\,r(t) S_t \,1_{\{ S_{t-}> K\}} dt +
        \frac{1}{2}L^K_{T}\right]  \nonumber\\[0.1cm]
    & + &  \mathbb{E}\left[ \sum_{0< t\leq T} \left((S_{t}-K)^+-(S_{t-}-K)^+-1_{\{ S_{t-}> K\}}\Delta S_{t}\right)\right].\nonumber
  \end{eqnarray}
Noting that $S_{t-}
1_{ \{S_{t-}> K\}}= (S_{t-}-K)^+ + K1_{ \{S_{t-}> K\}} $, we obtain
$$
\mathbb{E}\left[ \int_0^{T} e^{\int_0^t r(s)\,ds}\,r(t) S_{t-} 1_{ \{S_{t-}> K\}} dt\right]
 =\int_0^{T} r(t)e^{\int_0^{t}r(s)\,ds}\left[C(t,K)- K \frac{\partial C}{\partial K}(t,K)\right]\,dt,
$$
using Fubini's theorem and (\ref{ch3.breed}).
As for the jump term,
\begin{eqnarray*}
  &&\mathbb{E}\left[\sum_{0< t\leq T} (S_{t}-K)^+-(S_{t-}-K)^+-1_{ \{S_{t-}> K\}} \Delta S_{t}\right] \nonumber\\[0.1cm]
  &=& \mathbb{E}\left[\int_0^{T}dt\int m(t,dx)\,(S_{t-}e^x-K)^+-(S_{t-}-K)^+-1_{\{ S_{t-}> K\}}S_{t-}(e^x-1)\right]\nonumber\\[0.1cm]
  &=& \mathbb{E}\Big[\int_0^{T} dt \int m(t,dx)\big( (S_{t-}e^x-K)^+-(S_{t-}-K)^+ \nonumber\\[0.1cm]
    &&\quad\quad\quad-(S_{t-}-K)^+(e^x-1) - K1_{\{ S_{t-}> K\}}(e^x-1)\big)\Big]\nonumber.
\end{eqnarray*}
Applying Lemma \ref{ch3.calcul.de.B} to the random measure $m$ we
obtain that
$$
\int m(t,dx)\big((S_{t-}e^x-K)^+-e^{x}(S_{t-}-K)^+-K1_{\{ S_{t-}> K\}}(e^x-1)\big)=S_{t-}\,\psi_{t}\left(\ln{\left(\frac{K}{S_{t-}}\right)}\right)
$$
holds true. One observes that for all $z$ in $\mathbb{R}$
\begin{eqnarray*}
\psi_{t}(z)&\leq& 1_{\{z<0\}}\,\int_{-\infty}^z e^{z}\,m(t,du) +1_{\{z>0\}} \int_{-\infty}^z e^{u}\,m(t,du)\\
&=&1_{\{z<0\}}e^{z}\,\int_{-\infty}^z1.\, m(t,du) +1_{\{z>0\}} \int_{-\infty}^z e^{u}\,m(t,du).
\end{eqnarray*}
Using Assumption \eqref{ch3.H},
\begin{eqnarray*}
 \mathbb{E}\left[ \sum_{0< t\leq T} \left[ (S_{t}-K)^+-(S_{t-}-K)^+-1_{ \{S_{t-}> K\}}\Delta S_{t}\right]\right]
=\mathbb{E}\left[\int_0^{T} dt\,S_{t-}\,\psi_{t}\left(\ln{\left(\frac{K}{S_{t-}}\right)}\right)\right]<\infty.
\end{eqnarray*}
Hence applying Fubini's theorem leads to
\begin{eqnarray*}
 &&\mathbb{E}\left[\sum_{0< t\leq T} (S_{t}-K)^+-(S_{t-}-K)^+-1_{ \{S_{t-}> K\}} \Delta S_{t}\right]\\
  &=&\int_0^{T}dt\mathbb{E}\Big[\int \,m(t,dx)\big((S_{t-}e^x-K)^+
   -e^{x}(S_{t-}-K)^+-K1_{ \{S_{t-}> K\}}(e^x-1)\Big)\Big]\\
  &=& \int_0^{T}dt\,\mathbb{E}\left[S_{t-}\,\psi_{t}\left(\ln{\left(\frac{K}{S_{t-}}\right)}\right)\right]\nonumber\\[0.1cm]
   &=& \int_0^{T}dt\,\mathbb{E}\left[S_{t-}\mathbb{E}\left[\psi_{t}\left(\ln{\left(\frac{K}{S_{t-}}\right)}\right)|S_{t-}\right]\right]\nonumber\\[0.1cm]
&=&\int_0^{T}dt\,\mathbb{E}\left[S_{t-}\,\chi_{t,S_{t-}}\left(\ln{\left(\frac{K}{S_{t-}}\right)}\right)\right].
\end{eqnarray*}
Let $ \varphi\in C_0^\infty([0,T]\times]0,\infty[)$ be an infinitely differentiable function with compact support in $[0,T]\times]0,\infty[$. The extended occupation time formula  \cite[Chap. VI, Exercise 1.15]{revuzyor} yields
\begin{equation}\label{ch3.occup.time.formula}
\begin{split}
  \int_{0}^{+\infty} dK \int_0^{T}\,\varphi(t,K)\, dL^K_{t}&=\int_0^{T}\varphi(t,S_{t-}) d[S]_t^c= \int_0^{T} dt\,\varphi(t,S_{t-})S_{t-}^2\delta_t^2 .
\end{split}
\end{equation}
Since $ \varphi$ is bounded and has compact support, in order to apply Fubini's
theorem to
\begin{equation*}
\mathbb{E}\left[\int_{0}^{+\infty} dK \int_0^{T}\,\varphi(t,K)\,dL^K_{t}\right],
\end{equation*}
it is sufficient to show that $\mathbb{E}\left[L^K_{t}\right]<\infty$ for $t\in[0,T]$.
Rewriting equation (\ref{ch3.tanaka}) yields
\begin{equation*}
 \frac{1}{2} L^K_{T}= (S_{T}-K)^+-(S_0-K)^+-\int_0^{T} 1_{ \{S_{t-}> K\}} dS_t
  - \sum_{0< t\leq T} \left[ (S_{t}-K)^+-(S_{t-}-K)^+-1_{ \{S_{t-}> K\}}\Delta S_{t}\right].
\end{equation*}
 Since $\hat{S}$ is a martingale,   $\mathbb{E}[S_T]<\infty$ $\mathbb{E}\left[(S_T-K)^+\right]<\mathbb{E}\left[S_T\right]$ and \\
 $\mathbb{E}\left[\int_0^{T} 1_{ \{S_{t-}> K\}} dS_t\right]<\infty$. As discussed above,
$$\mathbb{E}\left[ \sum_{0< t\leq T}  \left((S_{t}-K)^+-(S_{t-}-K)^+-1_{ \{S_{t-}> K\}}\Delta S_{t}\right)\right]<\infty,$$
yielding that $\mathbb{E}\left[L^K_{T}\right]<\infty$. 
Hence, one may take expectations in equation (\ref{ch3.occup.time.formula}) to obtain
\begin{eqnarray}
\mathbb{E}\left[\int_{0}^{+\infty} dK \int_0^{T}\,\varphi(t,K)\,dL^K_{t}\right]&=& \mathbb{E}\left[ \int_0^{T}\varphi(t,S_{t-})S_{t-}^2\delta_t^2 dt\right]
  = \int_0^{T}dt\,\mathbb{E}\left[\varphi(t,S_{t-})S_{t-}^2\delta_t^2\right]\nonumber\\
  = \int_0^{T}dt\,\mathbb{E}\left[\mathbb{E}\left[\varphi(t,S_{t-})S_{t-}^2\delta_t^2 |S_{t-}\right] \right]
  &=& \mathbb{E}\left[\int_0^{T}dt\,\varphi(t,S_{t-})S_{t-}^2\sigma(t,S_{t-})^2 \right]\nonumber\\
  = \int_{0}^\infty \int_0^{T} \varphi(t,K)K^2\sigma(t,K)^2 p_t^S(dK)\,dt
  &=&  \int_0^{T} dt\,e^{\int_0^t r(s)\,ds}\int_{0}^\infty \varphi(t,K) K^2\sigma(t,K)^2 \frac{\partial^2 C}{\partial K^2}(t,dK),\nonumber
\end{eqnarray}
where the last line is obtained by using (\ref{ch3.breed}).
Using integration by parts,
 \begin{eqnarray*}
&& \int_{0}^\infty dK\, \int_0^{T} dt\,\varphi(t,K)\,\frac{\partial }{\partial t}\left[e^{\int_0^t r(s)\,ds}\,C(t,K)-(S_0-K)^+\right]\\[0.1cm]
&=& \int_{0}^\infty dK\, \int_0^{T} dt\,\varphi(t,K)\,\frac{\partial }{\partial t}\left[e^{\int_0^t r(s)\,ds}\,C(t,K)\right]\\[0.1cm]
&=& \int_{0}^\infty dK\, \int_0^{T} dt\,\varphi(t,K)\, e^{\int_0^{t} r(s)\,ds}\left[\frac{\partial C}{\partial t}(t,K)+r(t)C(t,K)\right]\\[0.1cm]
&=& -\int_{0}^\infty dK\, \int_0^{T} dt\,\frac{\partial \varphi}{\partial t} (t,K)\,\left[e^{\int_0^t r(s)\,ds}\,C(t,K)\right],
\end{eqnarray*}
where derivatives are used in the sense of distributions. Gathering together all terms,
\begin{eqnarray*}
&&\int_{0}^\infty dK\, \int_0^{T} dt\,\frac{\partial \varphi}{\partial t}(t,K)\,\left[e^{\int_0^t r(s)\,ds}\,C(t,K)\right]\\[0.1cm]
&=&\int_0^{T}dt\int_{0}^\infty dK\,\frac{\partial \varphi}{\partial t}(t,K)\,\int_0^t\,ds\,r(s)\,e^{\int_0^{s}r(u)\,du}[C(s,K)- K \frac{\partial C}{\partial K}(s,K)] \\
&+&\int_0^{T} dt\int_{0}^\infty \frac{1}{2}\frac{\partial \varphi}{\partial t}(t,K)\,\int_0^t dL_s^K\\[0.1cm]
&+&\int_0^{T}dt \int_{0}^{\infty}dK\,\,\frac{\partial \varphi}{\partial t}(t,K)\int_0^t ds\,e^{\int_0^{s} r(u)\,du}\,\int_0^{+\infty} y\frac{\partial^2 C}{\partial K^2}(s,dy)\chi_{s,y}\left(\ln{\left(\frac{K}{y}\right)}\right)\\[0.1cm]
&=&-\int_0^{T}dt\int_{0}^\infty dK\,\varphi(t,K)\,r(t)\,e^{\int_0^{t}r(s)\,ds}[C(t,K)- K \frac{\partial C}{\partial K}(t,K)] \nonumber\\[0.1cm]
&-&\int_0^{T} dt\int_{0}^\infty \frac{1}{2}\, \varphi(t,K)\,dL_t^K
-\int_0^{T}dt \int_{0}^{\infty}dK\,\varphi(t,K)\,e^{\int_0^{t} r(s)\,ds}\,\int_0^{+\infty} y\frac{\partial^2 C}{\partial K^2}(t,dy)\chi_{t,y}\left(\ln{\left(\frac{K}{y}\right)}\right).
\end{eqnarray*}
So finally we have shown that for any test function $ \varphi\in
C_0^\infty([0,T]\times]0,\infty[,\mathbb{R} )$,
\begin{eqnarray*}
&&\int_{0}^\infty dK\, \int_0^{T} dt\,\frac{\partial \varphi}{\partial t}(t,K)\,\left[e^{\int_0^t r(s)\,ds}\,C(t,K)\right]\\[0.1cm]
&=& -\int_{0}^\infty dK\, \int_0^{T} dt\,\varphi(t,K)\, e^{\int_0^{t} r(s)\,ds}\left[\frac{\partial C}{\partial t}(t,K)+r(t)C(t,K)\right]\\[0.1cm]
&=&-\int_0^{T}dt\int_{0}^\infty dK\,\varphi(t,K)\,r(t)\,e^{\int_0^{t}r(s)\,ds}[C(t,K)- K \frac{\partial C}{\partial K}(t,K)] \nonumber\\[0.1cm]
&-&\int_0^{T} dt\int_{0}^\infty \frac{1}{2}\,e^{\int_0^{t}r(s)\,ds}\,\varphi(t,K)\,K^2\sigma(t,K)^2 \frac{\partial^2 C}{\partial K^2}(t,dK) \nonumber\\[0.1cm]
&-&\int_0^{T}dt \int_{0}^{\infty}dK\,\varphi(t,K)\,e^{\int_0^{t} r(s)\,ds}\,\int_0^{+\infty} y\frac{\partial^2 C}{\partial K^2}(t,dy)\chi_{t,y}\left(\ln{\left(\frac{K}{y}\right)}\right).
\end{eqnarray*}
Therefore, $C(.,.)$ is a solution of \eqref{ch3.pide.forward.eq}
in the sense of distributions.
\end{proof}

\subsection{Uniqueness of solutions of the forward PIDE}
Theorem \ref{ch3.pide.forward.prop} shows that the call price
$(T,K)\mapsto C_{t_0}(T,K)$ solves the forward PIDE
\eqref{ch3.pide.forward.eq}.   Uniqueness of the solution of such PIDEs
 has been shown using analytical methods \cite{barles08,MR1911531} under various types of
conditions on the coefficients. We give below a direct proof of
uniqueness for \eqref{ch3.pide.forward.eq} using a probabilistic method,
under explicit conditions which cover most examples of models used
in finance.

Define, for   $u\in\mathbb{R}, t\in [0,T[, z> 0$ the measure
$n(t,du,z)$ by
\begin{equation}
\begin{split}
n(t,[u,\infty[,z)&=-e^{-u}\,\frac{\partial}{\partial u}\left[\chi_{t,z}(u)\right],\quad u>0\ ;\\
n(t,]-\infty,u],z)&=e^{-u}\frac{\partial}{\partial u}\left[\chi_{t,z}(u)\right],\quad u<0.
\end{split}
\end{equation}
Throughout this section, we make the following assumption:

\begin{assumption}\label{ch3.Assumption3}
\begin{equation}
\forall T>0, \forall B\in\mathcal{B}(\mathbb{R})-\{0\},\quad
(t,z)\to\sigma(t,z),\quad (t,z)\to n(t,B,z)\nonumber\end{equation}
are continuous in $z\in\mathbb{R}^+$, uniformly in $t\in[0,T]$; right-continuous in $t$ on $[0,T[$ uniformly
in $z\in \mathbb{R}^+$. and
\begin{equation}\tag{H'}\label{ch3.H'} \exists K_T>0,
\forall (t,z)\in[0,T]\times\mathbb{R}^+,\
|\sigma(t,z)|+\int_{\mathbb{R}} (1\wedge |z|^2)\,n(t,du,z)\leq
K_T.\end{equation}
\end{assumption}
\begin{theorem}\label{ch3.forward.pide.unicity}
Under Assumption \ref{ch3.Assumption3}, if
\begin{eqnarray*}
  &\mathrm{either}&\quad (i)\quad \forall R>0\,\ \forall t\in [0,T[, \quad \inf_{0\leq z \leq R}\sigma(t,z)>0, \\
  &\mathrm{or}&\quad(ii)\quad \sigma(t,z)\equiv 0\quad{\rm and}\ \exists \beta\in]0,2[,\, \exists C>0, \forall R>0, \forall (t,z)\in [0,T[\times[0,R],\\[0.1cm]
  && \quad \,\forall f\in{C}^0_0(\mathbb{R}-\{0\},\mathbb{R}_+),\qquad\int \left(n(t,du,z)-\frac{C\,du}{|u|^{1+\beta}}\right)\,f(u)\geq 0,\\[0.1cm]
  && \quad\,\exists K'_{T,R}>0, \int_{\{|u|\leq 1\}}|u|^{\beta}\,\left(n(t,du,z)-\frac{C\,du}{|u|^{1+\beta}}\right)\,dt\leq K'_{T,R}, \\
  &\mathrm{and}&\quad(iii)\quad \lim_{R\to\infty} \int_0^T \sup_{z\in\mathbb{R}^+} n\left(t,\{|u|\geq R\},z\right)\,dt=0,
\end{eqnarray*}
then the call
option price $(T,K)\mapsto C_{t_0}(T,K)$, as a function of maturity and
 strike, is the unique solution (in the sense of distributions) of the partial
integro-differential equation (\ref{ch3.pide.forward.eq}) on $[t_0,\infty[\times ]0,\infty[$ with
the initial condition: $\forall K>0\quad C_{t_0}(t_0,K)= (S_{t_0}-K)_+.$
\end{theorem}


The proof uses the uniqueness of the solution of the forward
Kolmogorov equation associated to a certain
integro-differential operator. We start by recalling the following result:
\begin{proposition}\label{ch3.evolution.equation}
Define, for $t\in[0,T]$ and $f\in C^\infty_0(\mathbb{R})$, the
integro-differential operator
\begin{equation}
  \begin{split}
    L_tf(x)&=r(t)xf'(x)+\frac{x^2\sigma(t,x)^2}{2}f''(x)\\
    &+\int_{\mathbb{R}}\left[f(t,xe^y)-f(t,x)-x(e^y-1) f'(x)\right]n(t,dy,x).
  \end{split}
\end{equation}
Under Assumption \ref{ch3.Assumption3}, if {\it either} conditions $(i)$
{\it or} $(ii)$  and $(iii)$ of Theorem \ref{ch3.forward.pide.unicity}
hold, then for each $x_0$ in $\mathbb{R}^+$, there exists a unique
family $(p_t(x_0,dy),t\geq 0)$ of bounded measures such that
\begin{equation}\label{ch3.kolmogore}
\forall g\in  \mathcal{C}_0^\infty(]0,\infty[,\mathbb{R}),\quad\int
g(y)\frac{dp}{dt}(x_0,dy)=\int p_t(x_0,dy)L_tg(y),\quad
p_0(x_0,.)=\epsilon_{x_0},
\end{equation}
where $\epsilon_{x_0}$ is the point mass at $x_0$. Furthermore,
$p_t(x_0,.)$ is a probability measure on $[0,\infty[$.
\end{proposition}
\begin{proof}
Denote by $(X_t)_{t\in [0,T]}$ the canonical process on
$D([0,T],\mathbb{R}_+)$. Under assumptions $(i)$ (or $(ii)$) and
$(iii)$, the martingale problem for $((L_t)_{t\in[0,T]},\mathcal{C}_0^\infty(\mathbb{R}^+))$  on $[0,T]$ is well-posed \cite[Theorem 1]{mikulevicius90}: for any
$x_0\in\mathbb{R}^+, t_0\in [0,T[$, there exists a unique
probability measure $\mathbb{Q}_{t_0,x_0}$ on
$(D([0,T],\mathbb{R}^+),{\cal B}_T)$ such that
$\mathbb{Q}_{t_0,x_0}(X_{t_0}=x_0)=1$ and for any
$f\in\mathcal{C}_0^\infty(\mathbb{R}^+)$,
\[ f(X_t)-f(x_0)-\int_{t_0}^t L_sf(X_s)\,ds\]
is a $\left(\mathbb{Q}_{t_0,x_0},(\mathcal{B}_t)_{t\geq t_0}\right)$-martingale on $[t_0,T]$.
Under $\mathbb{Q}_{t_0,x_0}$, $X$ is a  Markov process. Define the
evolution operator $(Q_{t_0,t})_{t\in[t_0,T]}$ by
\begin{equation}
\forall f\in \mathcal{C}_b^0(\mathbb{R}^+),\quad Q_{t_0,t}
f(x_0)=\mathbb{E}^{\mathbb{Q}_{t_0,x_0}}\left[f(X_t)\right].
\end{equation}
\begin{eqnarray*}
{\rm Then}\qquad Q_{t_0,t} f(x_0)
&=&f(x_0)+\mathbb{E}^{\mathbb{Q}_{t_0,x_0}}\left[\int_{t_0}^t L_sf(X_s)\,ds\right].\\
\end{eqnarray*}
Given Assumption \ref{ch3.Assumption3}, $t\in[0,T]\mapsto \int_{t_0}^t L_sf(X_s)\,ds$ is uniformly bounded on $[0,T]$. Given Assumption \ref{ch3.Assumption3}, since $X$ is right continuous $s\in[0,T[\mapsto L_sf(X_s)$ is right-continuous up to a $\mathbb{Q}_{t_0,x_0}$-null set and
\begin{equation*}
\lim_{t\downarrow t_0} \int_{t_0}^t L_sf(X_s)\,ds= 0 \quad\mathrm{a.s.}
\end{equation*}
Applying the dominated convergence theorem yields
\begin{equation*}
\lim_{t\downarrow t_0} \mathbb{E}^{\mathbb{Q}_{t_0,x_0}}\left[\int_{t_0}^t L_sf(X_s)\,ds\right]=0,
\qquad {\rm so}\quad
\lim_{t\downarrow t_0} Q_{t_0,t} f(x_0) = f(x_0),
\end{equation*}
implying that $t\in[0,T[\mapsto Q_{t_0,t} f(x_0)$ is right-continuous at $t_0$ for each $x_0\in\mathbb{R}^+$. Hence the evolution operator $(Q_{t_0,t})_{t\in[t_0,T]}$ verifies the following continuity property:
\begin{equation}
\forall f\in \mathcal{C}_b^0(\mathbb{R}^+), \forall x\in
 \mathbb{R}^+,\quad\lim_{t\downarrow t_0} Q_{t_0,t}f (x) =
 f(x).\label{ch3.strong.continuity}
\end{equation}
In particular, denoting $q_t(dy)$ the marginal distribution  of $X_t$, the map
\begin{equation}\label{ch3.weak-right-continuity}
 t\in[0,T[\mapsto \int_{\mathbb{R}^+}
q_t(dy) f(y)
\end{equation}
is right-continuous, for any $f\in \mathcal{C}_b^0(\mathbb{R}^+), x_0\in\mathbb{R}^+$.
The martingale property implies that $q_t(x_0,dy)$ satisfies
\begin{equation}\label{ch3.eq.qt.bis}
\forall g\in\mathcal{C}_0^\infty(\mathbb{R}^+),\quad \int_{\mathbb{R}^+}  q_t(x_0,dy)g(y)=g(x_0)+\int_0^t\int_{\mathbb{R}^+} q_s(x_0,dy)L_sg(y)\,ds.
\end{equation}
Given Assumption \ref{ch3.Assumption3}, $q_t$ is a solution of \eqref{ch3.kolmogore} with initial condition $q_0(dy)=\epsilon_{x_0}$.
In particular, the measure $q_t$ has  mass $1$.
To show uniqueness of solutions of \eqref{ch3.kolmogore},  we will rewrite \eqref{ch3.kolmogore} as the forward Kolmogorov equation
associated with a {\it homogeneous}  operator on space-time domain and use uniqueness results for the corresponding homogeneous equation.
Let $\mathcal{C}^1([0,T])\otimes \mathcal{C}_0^\infty(\mathbb{R}^+)$ be the tensor product of $\mathcal{C}^1([0,T])$ and $\mathcal{C}_0^\infty(\mathbb{R}^+)$. $L_t$ can be extended to a (homogeneous) linear operator $A$ on $\mathcal{C}^1([0,T])\otimes \mathcal{C}_0^\infty(\mathbb{R}^+)$ defined via
\begin{equation}\label{ch3.new.operator}
\forall f\in C^\infty_0(\mathbb{R}^+),\quad\forall \gamma\in\mathcal{C}^1([0,T]),\quad A(f\gamma)(t,x)=\gamma(t)L_tf(x)+f(x)\gamma'(t).
\end{equation}
\cite[Theorem 7.1, Chapter 4]{ethierkurtz} implies that for any $x_0\in \mathbb{R}^+$, if  $(X,\mathbb{Q}_{t_0,x_0})$ is a solution of the
martingale problem for $L$, then the law of $\eta_t=(t,X_t)$ under $\mathbb{Q}_{t_0,x_0}$ is a
solution of the martingale problem for  $A$: in particular for any
$f\in\mathcal{C}_0^\infty(\mathbb{R}^+)$ and
$\gamma\in\mathcal{C}([0,T])$,
\begin{equation}
     \int q_t(x_0,dy)f(y)\gamma(t)= f(x_0)\gamma(0)+\int_0^t \int q_s(x_0,dy)A(f\gamma)(s,y)\,ds.
    \end{equation}
 \cite[Theorem 7.1, Chapter 4]{ethierkurtz} implies also that if the law of $\eta_t=(t,X_t)$ is a
solution of the martingale problem for $A$ then the law of $X$ is also a solution of the
martingale problem for $L$, namely: uniqueness holds for the martingale problem associated to the
operator $L$ on $\mathcal{C}_0^\infty(\mathbb{R}^+)$ if and only if uniqueness
holds for the martingale problem associated to the  martingale problem for $A$ on
$\mathcal{C}^1([0,T])\otimes C^\infty_0(\mathbb{R}^+)$.
Define, for  $t\in[0,T] $ and $h \in \mathcal{C}_b^0([0,T]\times\mathbb{R}^+)$,
\begin{equation}
\forall (s,x) \in[0,T] \times \mathbb{R}^+,\quad \mathcal{Q}_{t}h(s,x)=\int_{\mathbb{R}^+} \,q_{t}(x,dy)h(t,y).
\end{equation}
which extends $Q_{0,t}$  to a 'homogeneous' operator on   $\mathcal{C}_b^0([0,T]\times\mathbb{R}^+)$. Using  (\ref{ch3.eq.qt.bis}), we have, for  $\epsilon>0$,
 \begin{eqnarray}\label{ch3.semigroup}
\forall (f,\gamma)\in\mathcal{C}^1([0,T])\times \mathcal{C}_0^\infty(\mathbb{R}^+), \quad
 \mathcal{Q}_{t}(f\gamma)(s,x_0)-\mathcal{Q}_{\epsilon}(f\gamma)(s,x_0) =\nonumber\\
\int_{\epsilon}^{t} \int_{\mathbb{R}^+} q_u(x_0,dy)A(f\gamma)(u,y)\,du
=\int_{\epsilon}^{t} \mathcal{Q}_{u}(A(f\gamma))(s,x_0)\,du .
\end{eqnarray}
By linearity,  for any $h\in\mathcal{C}^1([0,T])\otimes \mathcal{C}_0^\infty(\mathbb{R}^+)$ we have
\begin{equation}
 \mathcal{Q}_{t}h(s,x_0)-\mathcal{Q}_{\epsilon}h(s,x_0) =\int_{\epsilon}^{t} \int_{\mathbb{R}^+} q_u(x_0,dy)Ah(u,y)\,du=\int_{\epsilon}^{t} \mathcal{Q}_{u}Ah(s,x_0)\,du.
\end{equation}
Consider now a family  $p_t(x_0,dy)$ of positive measures  solution of (\ref{ch3.kolmogore}) such that $p_0(x_0,dy)=\epsilon_{x_0}(dy)$. Then $p_t$ is also a solution of (\ref{ch3.eq.qt.bis}). An integration  by parts implies that, for $ (f,\gamma)\in\mathcal{C}^1([0,T])\times \mathcal{C}_0^\infty(\mathbb{R}^+),$
    \begin{equation}\label{ch3.eq.pt.bis}
      \int_{\mathbb{R}^+} p_t(x_0,dy)f(y)\gamma(t)= f(x_0)\gamma(0)+\int_0^t \int_{\mathbb{R}^+} p_s(x_0,dy)A(f\gamma)(s,y)\,ds.
    \end{equation}
Define, for $t$ in $[0,T], h\in\mathcal{C}_b^0([0,T]\times\mathbb{R}^+)$,
\begin{eqnarray*}
\forall (s,x_0)\in [0,T]\times \mathbb{R}^+,\qquad \mathcal{P}_{t} h(s,x_0)&=& \int_{\mathbb{R}^+} p_t(x_0,dy)h(t,y).
\end{eqnarray*}
$(\mathcal{P}_{t})_{t\geq 0}$ is then a homogeneous semigroup.
Using (\ref{ch3.eq.pt.bis}), for $(f,\gamma)\in\mathcal{C}^1([0,T])\times \mathcal{C}_0^\infty(\mathbb{R}^+)$,
\begin{equation}\label{ch3.semigroup.bis}
\forall \epsilon>0,\quad \mathcal{P}_{t}(f\gamma)-\mathcal{P}_{\epsilon}(f\gamma) =\int_{\epsilon}^{t} \int_{\mathbb{R}^+} p_u(dy)A(f\gamma)(u,y)\,du=\int_{\epsilon}^{t} \mathcal{P}_{u}(A(f\gamma))\,du.
\end{equation}
which is identical to \eqref{ch3.semigroup}. Multiplying by $e^{-\lambda t}$ and integrating with respect to $t$ we obtain
\begin{eqnarray*}
\lambda \int_0^\infty e^{-\lambda t} \,\mathcal{P}_{t}(f\gamma)(s,x_0)\,dt&=&f(x_0)\gamma(0)+\lambda\int_0^\infty e^{-\lambda t}\int_0^t \mathcal{P}_{u}(A(f\gamma))(s,x_0)\,du\,dt\\
&=&f(x_0)\gamma(0)+\lambda\int_0^\infty \left(\int_u^\infty e^{-\lambda t} dt\right)\, \mathcal{P}_{u}(A(f\gamma))(s,x_0)\,du\\
&=&f(x_0)\gamma(0)+\int_0^\infty e^{-\lambda u}\, \mathcal{P}_{u}(A(f\gamma))(s,x_0)\,du.
\end{eqnarray*}
for any $\lambda>0$. Similarly, from  \eqref{ch3.semigroup} we obtain for any $\lambda>0,$
\begin{eqnarray*}
\lambda \int_0^\infty e^{-\lambda t} \,\mathcal{Q}_{t}(f\gamma)(s,x_0)\,dt
&=&f(x_0)\gamma(0)+\int_0^\infty e^{-\lambda u}\, \mathcal{Q}_{u}(A(f\gamma))(s,x_0)\,du.
\end{eqnarray*}
Hence for $(f,\gamma)\in\mathcal{C}^1([0,T])\times \mathcal{C}_0^\infty(\mathbb{R}^+)$ we have
\begin{equation}\label{ch3.identity.semigroup}
\int_0^\infty e^{-\lambda t}\,\mathcal{Q}_{t}(\lambda-A)(f\gamma)(s,x_0)\,dt=f(x_0)\gamma(0)=\int_0^\infty e^{-\lambda t}\,\mathcal{P}_{t}(\lambda- A)(f\gamma)\,dt.
\end{equation}
By linearity,  for any $h\in\mathcal{C}^1([0,T])\otimes \mathcal{C}_0^\infty(\mathbb{R}^+)$ we have
\begin{equation}\label{ch3.identity.semigroup.bis}
\int_0^\infty e^{-\lambda t}\,\mathcal{Q}_{t}(\lambda-A)h(s,x_0)\,dt=h(0,x_0)=\int_0^\infty e^{-\lambda t}\,\mathcal{P}_{t}(\lambda- A)h\,dt.
\end{equation}
From \cite[Proposition 2.1, Chapter 1]{ethierkurtz}, for all $\lambda>0$ $$Im(\lambda-A)=\mathcal{C}_b^0([0,T]\times\mathbb{R}^+)$$ where $Im(\lambda-A)$ denotes the image of $\mathcal{C}^1([0,T])\otimes \mathcal{C}_0^\infty(\mathbb{R}^+)$ by the mapping $(\lambda- A)$. Hence, since (\ref{ch3.identity.semigroup.bis}) holds
\begin{equation}\label{ch3.identity.semigroup.tris}
\forall h_in\mathcal{C}_b^0([0,T]\times \mathbb{R}^+),\quad\int_0^\infty e^{-\lambda t}\,\mathcal{Q}_{t}h\,(s,x_0)\,dt=\int_0^\infty e^{-\lambda t}\,\mathcal{P}_{t}h(s,x_0) \,dt,
\end{equation}
so the Laplace transform of $t\mapsto \mathcal{Q}_{t}h\,(s,x_0)$ is uniquely determined.
Using \eqref{ch3.semigroup.bis},
\begin{eqnarray}\label{ch3.semigroup.tris}
\forall \epsilon>0, \forall h\in\mathcal{C}^1([0,T])\otimes \mathcal{C}_0^\infty(\mathbb{R}^+),\nonumber\\
 \mathcal{P}_{t}h-\mathcal{P}_{\epsilon}h =\int_{\epsilon}^{t} \int_{\mathbb{R}^+} p_u(dy)Ah(u,y)\,du=\int_{\epsilon}^{t} \mathcal{P}_{u}(Ah)\,du
\end{eqnarray}
by linearity, which allows to show that, for any $h\in\mathcal{C}^1([0,T])\otimes \mathcal{C}_0^\infty(\mathbb{R}^+),$
$ t\mapsto \mathcal{P}_{t}h(s,x_0)$ is right-continuous:
$$
\forall h\in\mathcal{C}^1([0,T])\otimes \mathcal{C}_0^\infty(\mathbb{R}^+),\qquad \lim_{t' \downarrow t} \mathcal{P}_{t'}h(s,x_0)=\mathcal{P}_th(s,x_0).
$$
An identical argument using \eqref{ch3.semigroup.bis} shows that $
t\mapsto \mathcal{Q}_{t}h(s,x_0)$ is right-continuous. These two
right-continuous functions have the same Laplace transform by
\eqref{ch3.identity.semigroup.tris}, so they are equal. Thus we have
shown that
\begin{equation}
\forall  h\in\mathcal{C}^1([0,T])\otimes
\mathcal{C}_0^\infty(\mathbb{R}^+),\quad \int h(t,y) q_{t}(x_0,dy)=
\int h(t,y) p_t(x_0,dy). \label{ch3.eq.uniqueness}
\end{equation}
\cite[Proposition 4.4, Chapter 3]{ethierkurtz} implies that
$\mathcal{C}^1([0,T])\otimes \mathcal{C}_0^\infty(\mathbb{R}^+)$ is
separating, so \eqref{ch3.eq.uniqueness} allows to conclude that
$p_t(x_0,dy)=q_t(x_0,dy)$.
\end{proof}

We can now study the uniqueness of the forward PIDE
(\ref{ch3.pide.forward.eq}) and prove Theorem \ref{ch3.forward.pide.unicity}.
\begin{proof} of Theorem \ref{ch3.forward.pide.unicity}.
We start by decomposing $L_t$ as $L_t=A_t+B_t$ where
\begin{eqnarray*}
A_tf(y)=r(t)yf'(y)+\frac{y^2\sigma(t,y)^2}{2}f''(y),\quad {\rm and} \\
B_tf(y)=\int_{\mathbb{R}}[f(ye^z)-f(y)-y(e^z-1)f'(y)]n(t,dz,y).
\end{eqnarray*}
Then using the fact that $y\frac{\partial}{\partial y}(y-x)^+= x 1_{\{y>x\}}+(y-x)_+=y\,1_{\{y>x\}}$ and $\frac{\partial^2}{\partial y^2}(y-x)^+=\epsilon_x(y)$ where $\epsilon_x$ is a unit mass at $x$, we obtain
\begin{equation*}
A_t(y-x)^+=r(t)y\,1_{\{y>x\}}+\frac{y^2\sigma(t,y)^2}{2}\epsilon_x(y)\qquad {\rm and}
\end{equation*}
\begin{equation*}
\begin{split}
B_T(y-x)^+&=\int_{\mathbb{R}}[(ye^z-x)^+-(y-x)^+-(e^z-1)\left(x1_{\{y>x\}}+(y-x)^+\right)]n(t,dz,y)\\
&=\int_{\mathbb{R}}[(ye^z-x)^+-e^z(y-x)^+-x(e^z-1)1_{\{y>x\}}]n(t,dz,y).
\end{split}
\end{equation*}
Using Lemma \ref{ch3.calcul.de.B} for the random measure $n(t,dz,y)$ and $\psi_{t,y}$ its exponential double tail,
\begin{equation*}
B_t(y-x)^+=y\psi_{t,y}\left(\ln{\left(\frac{x}{y}\right)}\right)
\end{equation*}
Hence, the following identity holds
\begin{equation}\label{ch3.gen.payoff}
L_t(y-x)^+=r(t)\left(x 1_{\{y>x\}}+(y-x)_+\right)+\frac{y^2\sigma(t,y)^2}{2}\epsilon_x(y)+
y\psi_{t,y}\left(\ln{\left(\frac{x}{y}\right)}\right).
\end{equation}
Let $f: [t_0,\infty[\times ]0,\infty[\mapsto \mathbb{R}$ be a
solution in the sense of distributions of \eqref{ch3.pide.forward.eq}
with the initial condition : $f(0,x)=(S_0-x)^+$. Integration by
parts yields
\begin{eqnarray*}
&&\int_0^\infty \frac{\partial^2 f}{\partial x^2}(t,dy)L_t(y-x)^+\\
&=&\int_0^\infty \frac{\partial^2 f}{\partial x^2}(t,dy)\left(r(t)(x 1_{\{y>x\}}+(y-x)_+)+\frac{y^2\sigma(t,y)^2}{2}\epsilon_x(y)+
y\psi_{t,y}\left(\ln{\left(\frac{x}{y}\right)}\right)\right)\\
&=&-r(t)x\int_0^\infty \frac{\partial^2 f}{\partial x^2}(t,dy) 1_{\{y>x\}} + r(t)\int _0^\infty \frac{\partial^2 f}{\partial x^2}(t,dy)(y-x)^+\\
&+&\frac{x^2\sigma(t,x)^2}{2}\frac{\partial^2 f}{\partial x^2}+\int_0^\infty \frac{\partial^2 f}{\partial x^2}(t,dy) y\psi_{t,y}\left(\ln{\left(\frac{x}{y}\right)}\right)\\
&=&-r(t)x\frac{\partial f}{\partial x}+r(t)f(t,x)+\frac{x^2\sigma(t,x)^2}{2}\frac{\partial^2 f}{\partial x^2}+\int_0^\infty \frac{\partial^2 f}{\partial x^2}(t,dy)\, y\,\psi_{t,y}\left(\ln{\left(\frac{x}{y}\right)}\right).
\end{eqnarray*}
Hence given (\ref{ch3.pide.forward.eq}), the following equality holds
\begin{equation}\label{ch3.pide.forward.eq.gen}
\frac{\partial f}{\partial t}(t,x)=-r(t)f(t,x)+\int_0^\infty \frac{\partial^2 f}{\partial x^2}(t,dy)L_t(y-x)^+\ ,
\end{equation}
or, equivalently, after integration with respect to time $t$
\begin{equation}\label{ch3.f.integre}
 e^{\int_0^t r(s)\,ds}f(t,x)-f(0,x)= \int_0^\infty e^{\int_0^t r(s)\,ds}\frac{\partial^2 f}{\partial x^2}(t,dy)L_t(y-x)^+.
\end{equation}
 Integration by parts shows that
\begin{equation}
f(t,x)=\int_0^\infty  \frac{\partial^2 f}{\partial x^2}(t,dy)(y-x)^+.
\end{equation}
Hence (\ref{ch3.pide.forward.eq.gen}) may be rewritten as
\begin{equation}\label{ch3.pide.forward.eq.gen.integrated}
\int_0^\infty e^{\int_0^t r(s)\,ds}\frac{\partial^2 f}{\partial x^2}(t,dy)(y-x)^+-(S_0-x)^+= \int_0^t\int_0^\infty e^{\int_0^s r(u)\,du}\frac{\partial^2 f}{\partial x^2}(s,dy)L_s(y-x)^+\,ds.
\end{equation}
Define $q_t(dy)\equiv e^{\int_0^t r(s)\,ds}\,\frac{\partial^2 f}{\partial x^2}(t,dy)$, we have $q_0(dy)=\epsilon_{S_0}(dy)=p_{0}(S_0,dy)$.
For $g\in\mathcal{C}_0^\infty(]0,\infty[,\mathbb{R})$,  integration by parts yields
\begin{equation}
g(y)=\int_0^\infty g''(z)(y-z)^+\,dz.
\end{equation}
Replacing the above expression in $\int_{\mathbb{R}} g(y)q_t(dy)$ and using (\ref{ch3.pide.forward.eq.gen.integrated}) we obtain
\begin{eqnarray*}
&&\int_0^\infty g(y)q_t(dy)=\int_0^\infty g(y)\, e^{\int_0^t r(s)\,ds}\,\frac{\partial^2 f}{\partial x^2}(t,dy)\\
&=&\int_0^\infty g''(z)\int_0^\infty e^{\int_0^t r(s)\,ds}\,\frac{\partial^2 f}{\partial x^2}(t,dy)(y-z)^+\,dz\\
&=&\int_0^\infty g''(z)(S_0-z)^+\,dz+ \int_0^\infty g''(z)\int_0^t\int_0^\infty e^{\int_0^s r(u)\,du}\frac{\partial^2 f}{\partial x^2}(s,dy)L_s(y-z)^+\,dz\\
&=&g(S_0)+\int_0^t\int_0^\infty e^{\int_0^s r(u)\,du}\frac{\partial^2 f}{\partial x^2}(s,dy)L_s [\int_0^\infty g''(z)(y-z)^+\,dz]\\
&=&g(S_0)+\int_0^t\int_0^\infty q_s(dy)L_sg(y)\,ds.
\end{eqnarray*}
This is none other than equation (\ref{ch3.kolmogore}). By uniqueness of the solution $p_t(S_0,dy)$ of \eqref{ch3.kolmogore} in Proposition \ref{ch3.evolution.equation}),
$$
e^{\int_0^t r(s)\,ds}\,\frac{\partial^2 f}{\partial
x^2}(t,dy)= p_t(S_0,dy).
$$
One may rewrite equation \eqref{ch3.f.integre} as
\begin{equation*}
f(t,x)=  e^{-\int_0^t r(s)\,ds} \left(f(0,x)+ \int_0^\infty p_t(S_0,dy) L_t(y-x)^+\right),
\end{equation*}
showing that the solution
of   (\ref{ch3.pide.forward.eq}) with initial condition $f(0,x)=(S_0-x)^+$ is unique.
\end{proof}

\section{Examples}\label{ch3.examples.sec}
We now give various examples of  pricing models for which Theorem \ref{ch3.pide.forward.prop} allows to  retrieve or generalize previously known forms of forward pricing equations.
\subsection{It\^o processes}
When $(S_t)$ is an It\^o process i.e. when the jump part is absent,
the forward equation (\ref{ch3.pide.forward.eq}) reduces to the Dupire
equation \cite{dupire94}. In this case our result reduces to the
following:
\begin{proposition}[Dupire equation]
Consider the price process $(S_t)$ whose dynamics under the pricing measure
$\mathbb{P}$ is given by
$$
S_T = S_0 +\int_0^T r(t) S_{t} dt + \int_0^T S_{t}\delta_t dW_t.
$$
Assume there exists a measurable function  $\sigma:[t_0,T]\times \mathbb{R}^+-\{0\}\mapsto \mathbb{R}^+$ such that
\begin{equation}
 \forall t\in t\in[t_0,T],\qquad \sigma(t,S_{t-})=\sqrt{\mathbb{E}\left[\delta_t^2|S_{t^-}\right]}.
\end{equation}
If
\begin{equation}
\mathbb{E}\left[\exp{\left(\frac{1}{2}\int_0^T \delta_t^2\,dt\right)}\right]<\infty\qquad\ a.s.\quad \label{ch3.A1a}
\end{equation}
 the call
option price \eqref{ch3.def.call}
 is a solution (in the sense of distributions) of the partial differential equation
\begin{equation}
  \frac{\partial C_{t_0}}{\partial
T}(T,K) = -r(T)K\frac{\partial C_{t_0}}{\partial K}(T,K)+\frac{K^2\sigma(T,K)^2}{2}\, \frac{\partial^2 C_{t_0}}{\partial
K^2}(T,K)
\end{equation}
on $[t_0,\infty[\times ]0,\infty[$ with
the initial condition:
$$\forall K>0,\quad C_{t_0}(t_0,K)= (S_{t_0}-K)_+.$$
\end{proposition}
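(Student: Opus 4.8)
The statement is a direct corollary of Theorem \ref{pide.forward.prop}. The plan is to observe that the diffusion model considered here is the special case of the stochastic volatility model with jumps \eqref{stochmodel} in which the integer-valued random measure $M$ --- and hence its compensator $m(t,dy)$ --- vanishes identically. First I would check that, with $m\equiv 0$, the integrability condition \eqref{H} collapses to exactly the hypothesis $(H_{1})$ of the proposition, since the term $\int_0^T dt\int_{\mathbb{R}}(e^y-1)^2 m(t,dy)$ disappears; thus Theorem \ref{pide.forward.prop} applies under $(H_{1})$ alone.

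Next I would identify the coefficients appearing in \eqref{pide.forward.eq} in this degenerate situation. The local volatility function $\sigma(t,z)=\sqrt{\mathbb{E}[\delta_t^2\mid S_{t-}=z]}$ is unchanged, and since $S$ is now continuous we have $S_{t-}=S_t$ a.s., which matches the form stated in the proposition. The exponential double tail $\psi_t$ of the zero measure is identically zero on $\mathbb{R}$, both defining integrals being taken against $m(t,\cdot)\equiv 0$; hence $\chi_{t,y}(z)=\mathbb{E}[\psi_t(z)\mid S_{t-}=y]=0$ as well. Substituting $\chi\equiv 0$ into \eqref{pide.forward.eq} kills the integral term and leaves precisely
\begin{equation*}
\frac{\partial C_{t_0}}{\partial T} = -r(T)K\frac{\partial C_{t_0}}{\partial K}+\frac{K^2\sigma(T,K)^2}{2}\,\frac{\partial^2 C_{t_0}}{\partial K^2},
\end{equation*}
with the initial condition inherited from Theorem \ref{pide.forward.prop}.

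Alternatively, one could reproduce the proof of Section \ref{tanaka.sec} directly in the continuous setting: the Tanaka--Meyer formula \eqref{tanaka} applied to $(S_t-K)^+$ then carries no jump sum, the occupation-time formula gives $\int_0^\infty \varphi(K)(L^K_{T+h}-L^K_T)\,dK=\int_T^{T+h}\varphi(S_t)S_t^2\delta_t^2\,dt$ with $d[S]=d[S]^c$, and conditioning on $S_{t-}=S_t$ produces the local-volatility coefficient exactly as before. The one point that still needs care is the martingale property of the discounted price $\hat S_T=e^{-\int_0^T r(t)\,dt}S_T$: here $\hat S={\cal E}(U)$ with $\langle U,U\rangle_T=\int_0^T\delta_t^2\,dt$, so $(H_{1})$ is exactly a Novikov-type criterion (Remark \ref{novikov.rem}, \cite[Theorem 9]{protter08}) guaranteeing that $\hat S$ is a true $\mathbb{P}$-martingale, which is what makes the stochastic-integral term in \eqref{tanaka} vanish under expectation. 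No ellipticity or non-degeneracy of $\delta$ is used anywhere, which is precisely the improvement over the non-degeneracy hypothesis in Gy\"ongy \cite{gyongy}.
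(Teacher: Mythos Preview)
Your proposal is correct and follows exactly the paper's own approach: the paper's proof simply says ``It is sufficient to take $\mu\equiv 0$ in (\ref{stochmodel}) then equivalently in (\ref{pide.forward.eq}),'' leaving the remaining verification to the reader. Your write-up is in fact more detailed than the paper's, explicitly spelling out why $(H)$ reduces to $(H_1)$ and why $\psi_t\equiv 0$ forces $\chi_{t,y}\equiv 0$, which is precisely the filling-in the paper omits.
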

Notice in particular that this result does not require a
non-degeneracy condition on the diffusion term.
\begin{proof}
It is sufficient to take $\mu\equiv 0$ in (\ref{ch3.stochmodel}) then equivalently in (\ref{ch3.pide.forward.eq}). We leave the end of the proof to the reader.
\end{proof}
\subsection{Markovian jump-diffusion models}
Another important particular case in the literature is the case of a Markov jump-diffusion  driven by a Poisson random measure.   Andersen and Andreasen \cite{andersen} derived a forward PIDE in the situation where the jumps are driven by a compound Poisson process  with time-homogeneous Gaussian jumps.
We will now show here that Theorem \ref{ch3.pide.forward.prop} implies the PIDE derived  in \cite{andersen}, given here in a more general context
allowing for a time- and state-dependent L\'evy measure, as well as infinite number of jumps per unit time (``infinite jump activity").
\begin{proposition}[Forward PIDE for jump diffusion model]
Consider the price process $S$ whose dynamics under the pricing measure
$\mathbb{P}$ is given by
\begin{equation}
S_t = S_0 +\int_0^T r(t)S_{t-} dt + \int_0^T S_{t-}\sigma(t,S_{t-}) dB_t + \int_0^T\int_{-\infty}^{+\infty} S_{t-}(e^y - 1) \tilde{N}(dt dy)
\end{equation}
where $B_t$ is a Brownian motion and $N$ a Poisson random measure on $[0,T]\times\mathbb{R}$ with compensator
$\nu(dz)\,dt$, $\tilde{N}$ the associated compensated random measure.
Assume that
   \begin{equation}\label{ch3.A'1a}
      \sigma(.,.)\quad{\rm is\ bounded}\qquad{\rm and}\quad \int_{ \{|y|>1\}} e^{2y} \nu(dy) <\infty.
   \end{equation}
Then the call option price
$$
C_{t_0}(T,K)=e^{-\int_{t_0}^T r(t)\,dt}E^{\mathbb{P}}[\max(S_T-K,0)|\mathcal{F}_{t_0}]
$$
 is a solution (in the sense of distributions) of the PIDE
\begin{equation}\label{ch3.pide.andersen}
\begin{split}
  \frac{\partial C_{t_0}}{\partial
T}(T,K) &= -r(T)K\frac{\partial C_{t_0}}{\partial K}(T,K)+\frac{K^2\sigma(T,K)^2}{2}\, \frac{\partial^2 C_{t_0}}{\partial
K^2}(T,K) \\
&\quad\quad +\int_{\mathbb{R}} \nu(dz)\,e^z\left[C_{t_0}(T,Ke^{-z})-C_{t_0}(T,K)-K(e^{-z}-1)\frac{\partial C_{t_0}}{\partial K}\right]
\end{split}
\end{equation}
on $[t_0,\infty[\times ]0,\infty[$ with
the initial condition:
$$\forall K>0,\quad C_{t_0}(t_0,K)= (S_{t_0}-K)_+.$$
\end{proposition}
\begin{proof}
As in the proof of Theorem \ref{ch3.pide.forward.prop}, by replacing $\mathbb{P}$ by the conditional measure ${\mathbb{P}}_{\mathcal{F}_{t_0}}$  given $\mathcal{F}_{t_0}$, we may
replace the conditional expectation in (\ref{ch3.def.call}) by an expectation with
respect to the marginal distribution $p^S_T(dy)$ of $S_T$ under ${\mathbb{P}}_{|\mathcal{F}_{t_0}}$.
Thus, without loss of generality, we put $t_0=0$ in the sequel, consider the case where ${\mathcal{F}_0}$ is the $\sigma$-algebra generated by all $\mathbb{P}$-null sets and we denote $C_0(T,K)\equiv C(T,K)$ for simplicity.\\
Differentiating (\ref{ch3.def.call}) in the sense of distributions with respect to $K$, we obtain:
\begin{eqnarray*}
&&\frac{\partial C}{\partial K}(T,K)=-e^{-\int_0^T r(t)\,dt}\int_K^\infty p^S_{T}(dy),\qquad\frac{\partial^2 C}{\partial K^2}(T,dy)=e^{-\int_0^T r(t)\,dt}p^S_{T}(dy).
\end{eqnarray*}
In this particular case, $m(t,dz)\,dt\equiv\nu(dz)\,dt$ and $\psi_t$ are simply given by:
\begin{equation}\
  \psi_{t}(z)\equiv\psi(z)=
  \begin{cases}
&\int_{-\infty}^z dx\  e^x \int_{-\infty}^x \nu(du) \quad z<0\nonumber\\
&\int_{z}^{+\infty} dx\  e^x \int_x^{\infty} \nu(du)\quad z>0\nonumber
\end{cases}
\end{equation}
Then \eqref{ch3.new.para} yields
$$
\chi_{t,S_{t-}}(z)=\mathbb{E}\left[\psi_t\left(z\right)|S_{t-}\right]=\psi(z).
$$ 
Let us now focus on the term
$$ \int_{0}^{+\infty} y\frac{\partial^2 C}{\partial K^2}(T,dy)\,\chi\left(\ln{\left(\frac{K}{y}\right)}\right)$$
in (\ref{ch3.pide.forward.eq}). Applying Lemma \ref{ch3.calcul.de.B} yields
\begin{eqnarray}
&&\int_{0}^{+\infty} y\frac{\partial^2 C}{\partial K^2}(T,dy)\,\chi\left(\ln{\left(\frac{K}{y}\right)}\right)\nonumber\\
&=&\int_0^\infty e^{-\int_0^T r(t)\,dt}p^S_{T}(dy)\int_{\mathbb{R}}[(ye^z-K)^+-e^z(y-K)^+-K(e^z-1)1_{ \{y>K\}}]\nu(dz)\nonumber\\
&=& \int_{\mathbb{R}}e^z\int_0^\infty e^{-\int_0^T r(t)\,dt}\,p^S_{T}(dy)[(y-Ke^{-z})^+-(y-K)^+-K(1-e^{-z})1_{ \{y>K\}}]\,\nu(dz)\nonumber\\
&=&\int_{\mathbb{R}} e^z\left[C(T,Ke^{-z})-C(T,K)-K(e^{-z}-1)\frac{\partial C}{\partial K}\right]\,\nu(dz).
\end{eqnarray}
This ends the proof.
\end{proof}

\subsection{Pure jump processes}\label{ch3.purejump.sec}
For price processes  with no Brownian component,
Assumption \eqref{ch3.H}  reduces to
$$
  \forall T>0, \quad \mathbb{E}\left[\exp{\left(\int_0^T dt \int (e^y-1)^2 m(t,\,dy)\right)}\right]<\infty.
$$
Assume there exists a measurable function $\chi:[t_0,T]\times \mathbb{R}^+-\{0\}\mapsto \mathbb{R}^+$ such that for all $t\in[t_0,T]$ and for all $z\in\mathbb{R}$:
\begin{equation}
 \chi_{t,S_{t-}}(z)=\mathbb{E}\left[\psi_t\left(z\right)|S_{t-}\right],
\end{equation}
with
$$
\psi_{T}(z)=
 \begin{cases}
&\int_{-\infty}^z dx\  e^x \int_{-\infty}^x m(T,du),\quad z<0\ ;\\
&\int_{z}^{+\infty} dx\  e^x \int_x^{\infty} m(T,du), \quad z>0,\\
\end{cases}
$$
then, the forward equation for call option becomes
\begin{equation}
  \frac{\partial C}{\partial T} + r(T)K\frac{\partial C}{\partial K} = \int_{0}^{+\infty} y\frac{\partial^2 C}{\partial K^2}(T,dy)\,\chi_{T,y}\left(\ln{\left(\frac{K}{y}\right)}\right).
\end{equation}
It is convenient to use the change of variable: $v=\ln y, k=\ln K$. Define $c(k,T)=C(e^k,T)$. Then one can write this PIDE as
\begin{equation} \label{ch3.purejump.eq}
  \frac{\partial c}{\partial
T} + r(T)\frac{\partial c}{\partial k} = \int_{-\infty}^{+\infty} e^{2(v-k)}\left(\frac{\partial^2 c}{\partial k^2}-\frac{\partial c}{\partial k}\right)(T,dv)\,\chi_{T,v}(k-v).
\end{equation}
In the case, considered in \cite{carrgeman04}, where the L\'evy density $m_Y$ has a deterministic separable form
\begin{equation}\label{ch3.separable.sec}
m_Y(t,dz,y)\,dt=\alpha(y,t)\,k(z)\,dz\,dt,
\end{equation}
 Equation (\ref{ch3.purejump.eq}) allows us to recover\footnote{Note however that
the equation given in \cite{carrgeman04} does not seem to be correct: it involves the double tail of $k(z)\,dz$ instead of the exponential double tail.} equation (14) in \cite{carrgeman04}
$$
  \frac{\partial c}{\partial
T} + r(T)\frac{\partial c}{\partial k} = \int_{-\infty}^{+\infty} \kappa(k-v)e^{2(v-k)}\alpha(e^{v},T)\left(\frac{\partial^2 c}{\partial k^2}-\frac{\partial c}{\partial k}\right)(T,dv)\,
$$
where $\kappa$ is defined as the exponential double tail of $k(u)\,du$, i.e.
$$
\kappa(z)=
 \begin{cases}
&\int_{-\infty}^z dx\  e^x \int_{-\infty}^x k(u)\,du\quad z<0\ ;\\
&\int_{z}^{+\infty} dx\  e^x \int_x^{\infty} k(u)\,du \quad z>0.\\
\end{cases}
$$
The right hand side can be written as a  convolution of distributions:
\begin{eqnarray}\label{ch3.locallevy.eq}\frac{\partial c}{\partial
T} + r(T)\frac{\partial c}{\partial k} = [a_T(.)\ \left(\frac{\partial^2 c}{\partial k^2}-\frac{\partial c}{\partial k}\right)]*g\qquad
 {\rm where}\\
g(u)= e^{-2u}\kappa(u)\qquad a_T(u)=\alpha(e^{u},T).
\end{eqnarray}
Therefore, knowing $c(.,.)$ and given $\kappa(.)$ we can recover $a_T$ hence $\alpha(.,.)$.
As noted by Carr et al. \cite{carrgeman04}, this equation is analogous to the Dupire formula for diffusions: it enables to ``invert" the structure of the jumps--represented by $\alpha$-- from the cross-section of option prices.
Note that, like the Dupire formula, this inversion involves a double deconvolution/differentiation of $c$ which illustrates the ill-posedness of the inverse problem.
\subsection{Time changed  L\'evy processes}
Time changed L\'evy processes were proposed in \cite{carrgeman03} in the context of option pricing.
Consider the price process $S$ whose dynamics under the pricing measure
$\mathbb{P}$ is given by
\begin{equation}
S_t\equiv e^{\int_0^tr(u)\,du}\,X_t\qquad X_t= \exp{\left(L_{\Theta_t}\right)} \qquad \Theta_t=\int_0^t \theta_s ds
\end{equation}
where
 $L_t$ is a L\'evy process with characteristic
triplet $(b,\sigma^2,\nu)$, $N$ its jump measure and $(\theta_t)$ is a locally bounded
 positive semimartingale. 
 $X$ is a $\mathbb{P}$-martingale if 
\begin{equation}
 b+\frac{1}{2}\sigma^2+\int_{\mathbb{R}}(e^z-1-z\,1_{ \{|z|\leq 1\}})\nu(dy)=0.
\end{equation}
Define the value $C_{t_0}(T,K)$ at  $t_0$ of the call option with expiry $T>t_{0}$
and strike $K>0$ as
\begin{equation}
C_{t_0}(T,K)=e^{-\int_0^Tr(t)\,dt}E^\mathbb{P}[\max(S_T-K,0)|\mathcal{F}_{t_0}].
\end{equation}
\begin{proposition}\label{ch3.pide.forward.levy.th}
Assume there exists a measurable function $\alpha:[0,T]\times \mathbb{R}\mapsto \mathbb{R}$
such that
\begin{equation}
 \alpha(t,X_{t-})=E[\theta_t| X_{t-}],
\end{equation}
and let $\chi$ be the exponential double tail of $\nu$, defined as
\begin{equation}
  \chi(z)=
  \begin{cases}
&\int_{-\infty}^z dx\  e^x \int_{-\infty}^x \nu(du), \quad z<0\ ;\\
&\int_{z}^{+\infty} dx\  e^x \int_x^{\infty} \nu(du), \quad z>0.\\
\end{cases}
\end{equation}
If  $\beta=\frac{1}{2}\sigma^2 + \int_{\mathbb{R}} (e^y-1)^2 \nu(dy)<\infty$ and
\begin{equation}
  \mathbb{E}\left[\exp{(\beta \Theta_T )}\right]<\infty, \label{ch3.integrability.timechange}
\end{equation}
then the call
option price $C_{t_0}:(T,K)\mapsto C_{t_0}(T,K)$ at date $t_0$, as a function of maturity and
 strike, is a solution (in the sense of distributions) of the partial
integro-differential equation
\begin{equation}\label{ch3.pide.forward.levy.eq}
\begin{split}
  \frac{\partial C}{\partial
T}(T,K) &= -r\alpha(T,K)K\frac{\partial C}{\partial K}(T,K)+\frac{K^2\alpha(T,K)\sigma^2}{2}\, \frac{\partial^2 C}{\partial
K^2}(T,K) \\
&+\int_{0}^{+\infty} y\frac{\partial^2 C}{\partial K^2}(T,dy)\,\alpha(T,y)\,\chi\left(\ln{\left(\frac{K}{y}\right)}\right)
\end{split}
\end{equation}
on $[t,\infty[\times ]0,\infty[$ with
the initial condition:\ $\forall K>0\quad C_{t_0}(t_0,K)= (S_{t_0}-K)_+.$
\end{proposition}
\begin{proof}
Using Lemma \cite[Lemma 2]{bentatacont09}, $(L_{\Theta_t})$ writes
  \begin{eqnarray}
    L_{\Theta_t} &=& L_0+\int_0^t \sigma  \sqrt{\theta_s} dB_s + \int_0^t b \theta_s ds \nonumber \\[0.1cm]
    &+& \int_0^t\theta_s\int_{|z|\leq 1} z \tilde{N}(ds\,dz) + \int_0^t\int_{\{|z|>1\}} z N(ds\,dz)\nonumber
 \end{eqnarray}
where $N$ is an integer-valued random measure with compensator
$\theta_t\nu(dz)\,dt$, $\tilde{N}$ its compensated random measure.
Applying the It\^{o} formula yields
\begin{eqnarray}
X_t&=&X_0+\int_0^t X_{s-}dL_{T_s}+\frac{1}{2}\int_0^t X_{s-}\sigma^2\theta_s\,ds
+\sum_{s\leq t} \left(X_s-X_{s-}-X_{s-}\Delta L_{T_s}\right)\nonumber\\
&=&X_0+\int_0^t X_{s-}\left[b\theta_s+\frac{1}{2}\sigma^2\theta_s\right]\,ds+\int_0^tX_{s-}\sigma\sqrt{\theta_s}\,dB_s\nonumber\\
&+& \int_0^tX_{s-}\theta_s\int_{\{|z|\leq 1\}} z \tilde{N}(ds\,dz) + \int_0^tX_{s-}\theta_s\int_{\{|z|>1\}} z N(ds\,dz)\nonumber\\
&+& \int_0^t\int_{\mathbb{R}} X_{s-}(e^z-1-z)N(ds\,dz)\nonumber
\end{eqnarray}
Under our assumptions, $\int(e^z-1-z\,1_{ \{|z|\leq 1\}})\nu(dz)<\infty$, hence:
\begin{eqnarray}
X_t&=&X_0+\int_0^t X_{s-}\left[b\theta_s+\frac{1}{2}\sigma^2\theta_s+\int_{\mathbb{R}}(e^z-1-z\,1_{\{ |z|\leq 1\}})\theta_s\nu(dz)\right]\,ds+\int_0^tX_{s-}\sigma\sqrt{\theta}\,dB_s\nonumber\\
&+& \int_0^t\int_{\mathbb{R}} X_{s-}\theta_s(e^z-1)\tilde{N}(ds\,dz)\nonumber\\
&=&X_0 + \int_0^tX_{s-}\sigma\sqrt{\theta_s}\,dB_s + \int_0^t\int_{\mathbb{R}} X_{s-}(e^z-1)\tilde{N}(ds\,dz)\nonumber
\end{eqnarray}
and $(S_t)$ may be expressed as
$$
S_t=S_0 + \int_0^t S_{s-} r(s)\,ds+ \int_0^tS_{s-}\sigma\sqrt{\theta_s}\,dB_s + \int_0^t\int_{\mathbb{R}} S_{s-}(e^z-1)\tilde{N}(ds\,dz).
$$
Assumption \eqref{ch3.integrability.timechange} implies that $S$ fulfills Assumption $(H)$ of Theorem \ref{ch3.pide.forward.prop} and $(S_t)$ is now in the suitable form (\ref{ch3.stochmodel}) to apply Theorem \ref{ch3.pide.forward.prop}, which yields the result.
\end{proof}

\subsection{Index options in a multivariate jump-diffusion model}\label{ch3.index.sec}
Consider  a multivariate model with $d$ assets
$$
S_T^i = S_0^i +\int_0^T r(t) S_{t^-}^i dt + \int_0^T S_{t^-}\delta_t^i dW_t^i + \int_0^T\int_{\mathbb{R}^d} S_{t^-}^i(e^{y_i} - 1) \tilde{N}(dt\,dy)
$$
where $\delta^i$ is an adapted process taking values in $\mathbb{R}$ representing the volatility of asset $i$, $W$ is a d-dimensional Wiener process, $N$ is a Poisson random measure on $[0,T]\times\mathbb{R}^d$ with compensator $\nu(dy)\,dt$, $\tilde{N}$ denotes its compensated random measure. The Wiener processes $W^i$ are correlated
$$\forall 1\leq (i,j)\leq d, \langle W^i, W^j \rangle_t=\rho_{i,j}t,$$
with $\rho_{ij}>0$ and $\rho_{ii}=1$. An index is defined as a weighted sum of asset prices
$$
  I_t=\sum_{i=1}^d w_i S^i_t \quad w_i>0\quad \sum_1^d w_i=1.
$$
The value $C_{t_0}(T,K)$ at time $t_0$ of an index call option with expiry $T>t_{0}$
and strike $K>0$ is given by
\begin{equation}
C_{t_0}(T,K)=e^{-\int_{t_0}^T r(t)\,dt}E^{\mathbb{P}}[\max(I_T-K,0)|\mathcal{F}_{t_0}].
\end{equation}
The following result is a generalization of the forward PIDE studied by Avellaneda et al. \cite{avellaneda03} for the diffusion case:
\begin{theorem}{Forward PIDE for index  options.}\label{ch3.pide.index}
Assume
\begin{equation}\label{ch3.assum.tris}
   \begin{cases}
     &\forall T>0 \quad \mathbb{E}\left[\exp{\left(\frac{1}{2}\int_0^T\|\delta_t\|^2\,dt\right)}\right]<\infty \\[0.1cm]
     &\int_{\mathbb{R}^d}(1\wedge \|y\|)\,\nu(dy)<\infty\quad a.s.\\[0.1cm]
     & \int_{ \{\|y\|>1\}} e^{2\|y\|} \nu(dy)<\infty\quad a.s.
   \end{cases}
 \end{equation}
Define
\begin{equation}\label{ch3.def.eta}
\eta_t(z)=
\begin{cases}
&\int_{-\infty}^z dx\  e^x \int_{ \mathbb{R}^d} 1_{\ln{\left(\frac{\sum_{1\leq i\leq d-1} w_iS_{t-}^i e^{y_i}}{I_{t-}}\right)}\leq x }\nu(dy)\quad z<0\\
&\int_{z}^\infty dx\  e^x \int_{ \mathbb{R}^d}
1_{\ln{\left(\frac{\sum_{1\leq i\leq d-1} w_iS_{t-}^i
e^{y_i}}{I_{t-}}\right)}\geq x }\nu(dy) \quad z>0
\end{cases}
\end{equation}
and assume there exists measurable functions $\sigma:[t_0,T]\times \mathbb{R}^+-\{0\}\mapsto \mathbb{R}^+$, $\chi:[t_0,T]\times \mathbb{R}^+-\{0\}\mapsto \mathbb{R}^+$ such that for all $t\in[t_0,T]$ and for all $z\in\mathbb{R}$:
\begin{equation}
\begin{cases}
    \sigma(t,I_{t-})&=\frac{1}{z}\,\sqrt{\mathbb{E}\left[\left(\sum_{i,j=1}^d w_iw_j\rho_{ij}\,\delta_t^i\delta_t^j\,S^{i}_{t-} S^j_{t-}\right)|I_{t^-}\right]}\quad a.s.,\\
\chi_{t,I_{t-}}(z)&=\mathbb{E}\left[\eta_t\left(z\right)|I_{t-}\right]\quad a.s.
\end{cases}
\end{equation}
Then the index call  price $(T,K)\mapsto C_{t_0}(T,K)$, as a function of maturity and
strike, is a solution (in the sense of distributions) of the partial
integro-differential equation
\begin{equation} \label{ch3.pide.forward.eq.index}
  \frac{\partial C_{t_0}}{\partial
T} = -r(T)K\frac{\partial C{t_0}}{\partial K}+\frac{\sigma(T,K)^2}{2}\, \frac{\partial^2 C{t_0}}{\partial
K^2}+\int_{0}^{+\infty} y\frac{\partial^2 C{t_0}}{\partial K^2}(T,dy)\,\chi_{T,y}\left(\ln{\left(\frac{K}{y}\right)}\right)
\end{equation}
on $[t_0,\infty[\times ]0,\infty[$ with
the initial condition:
$$\forall K>0,\quad C_{t_0}(t_0,K)= (I_{t_0}-K)_+.$$

\end{theorem}

\begin{proof}
$(B_t)_{t\geq 0}$ defined by
\[
dB_t=\frac{\sum_{i=1}^d w_iS^i_{t-}\delta_t^i dW_t^i}{\left(\sum_{i,j=1}^d w_iw_j\rho_{ij}\,\delta_t^i\delta_t^j\,S^{i}_{t-} S^j_{t-}\right)^{1/2}}
\]
 is a continuous local martingale with quadratic variation $t$: by L\'evy's theorem, $B$ is a Brownian motion.
Hence $I$ may be decomposed as
\begin{equation}\label{ch3.index}
\begin{split}
I_T &= \sum_{i=1}^d w_i S^i_0+ \int_0^T r(t)I_{t-}\,dt + \int_0^T \left(\sum_{i,j=1}^d w_iw_j\rho_{ij}\,\delta_t^i\delta_t^j\,S^{i}_{t-} S^j_{t-}\right)^{\frac{1}{2}} dB_t\\
& + \int_0^T\int_{\mathbb{R}^d}  \sum_{i=1}^d w_i S^i_{t-}(e^{y_i}-1)\tilde{N}(dt\,dy)
\end{split}
\end{equation}
The essential part of the proof consists in rewriting $(I_t)$ in the suitable form (\ref{ch3.stochmodel}) to apply Theorem \ref{ch3.pide.forward.prop}.
Applying the It\^{o} formula to $\ln{(I_T)}$ yields
\begin{eqnarray*}
   \ln \frac{I_T}{I_0}
&=&\int_0^T \Big[r(t)-\frac{1}{2I_{t-}^2}\sum_{i,j=1}^d w_iw_j\rho_{ij}\,\delta_t^i\delta_t^j\,S^{i}_{t-} S^j_{t-}\nonumber\\
&& -\int\big(\frac{\sum_{1\leq i\leq d} w_iS_{t-}^i e^{y_i}}{I_{t-}}-1-\ln{\big(\frac{\sum_{1\leq i\leq d} w_iS_{t-}^i e^{y_i}}{I_{t-}}\big)}\big)\nu(dy)\Big]\,dt\nonumber\\
&+& \int_0^T\frac{1}{I_{t-}}\left(\sum_{i,j=1}^d w_iw_j\rho_{ij}\,\delta_t^i\delta_t^j\,S^{i}_{t-} S^j_{t-}\right)^{\frac{1}{2}}\,dB_t
  +\int_0^T\int \ln{\left(\frac{\sum_{1\leq i\leq d} w_iS_{t-}^i e^{y_i}}{I_{t-}}\right)}\,\tilde{N}(dt\:dy).
\end{eqnarray*}
Using the convexity property of the logarithm,
\begin{equation*}
\ln{\left(\frac{\sum_{1\leq i\leq d} w_iS_{t-}^i e^{y_i}}{I_{t-}}\right)}\geq \frac{\sum_{1\leq i\leq d} w_iS_{t-}^i y_i}{I_{t-}}\geq -\|y\|,\quad {\rm and}\quad
\ln{\left(\frac{\sum_{1\leq i\leq d} w_iS_{t-}^i e^{y_i}}{I_{t-}}\right)}\leq \ln{\left( \max_{1\leq i \leq d}\, e^{y_i}\right)},
\end{equation*}
implying that
\begin{equation*}
\left|\ln{\left(\frac{\sum_{1\leq i\leq d} w_iS_{t-}^i e^{y_i}}{I_{t-}}\right)}\right|\leq\left|\sum_{1\leq i\leq d} \frac{w_iS_{t-}^i}{I_{t-}} y_i\right|
\leq\sum_{1\leq i\leq d} |y_i|\leq \|y\|,
\end{equation*}
so the functions $y\to\ln{\left(\frac{\sum_{1\leq i\leq d} w_iS_{t-}^i e^{y_i}}{I_{t-}}\right)}$ and $y\to \frac{\sum_{1\leq i\leq d} w_iS_{t-}^i e^{y_i}}{I_{t-}}$
are integrable with respect to $\nu(dy)$ under the assumptions \eqref{ch3.assum.tris}.
We furthermore observe that
\begin{equation}\label{ch3.holder}
 \begin{split}
  &\int  1\wedge \left|\ln{\left(\frac{\sum_{1\leq i\leq d} w_iS_{t-}^i e^{y_i}}{I_{t-}}\right)}\right|\,\nu(dy)<\infty\quad a.s.\\
  & \int_0^T \int_{ \{\|y\|>1\}} e^{2\left|\ln{\left(\frac{\sum_{1\leq i\leq d} w_iS_{t-}^i e^{y_i}}{I_{t-}}\right)}\right|} \nu(dy)\,dt<\infty\quad a.s.
   \end{split}
 \end{equation}
Similarly,  \eqref{ch3.assum.tris} implies that  $\int\left(e^{y_i}-1-1_{\{ |y_i|\leq 1\}}y_i\,\right)\nu(dy)<\infty$ so $\ln{(S_T^i)}$ may be expressed as
\begin{eqnarray*}
  \ln{(S_T^i)}&=&\ln{(S_0^i)}+\int_0^T \big( r(t)-\frac{1}{2}(\delta_t^i)^2-\int \left(e^{y_i}-1-1_{ \{|y_i|\leq 1\}}y_i\,\right)\nu(dy)\big)\,dt\nonumber\\
&+&\int_0^T \delta_t^i\,dW_t^i +\int_0^T\int y_i\,\tilde{N}(dt\:dy)\nonumber
\end{eqnarray*}
Define the d-dimensional martingale $W_t=(W_t^1,\cdots,W_t^{d-1},B_t)$.
For $1\leq i,j \leq d-1$ we have
\[
\langle W^i,W^j \rangle_t=\rho_{i,j}t \quad{\rm and}\quad \langle W^i, B\rangle_t=\frac{\sum_{j=1}^d w_j\rho_{ij}S_{t-}^j\delta_t^j}{\left(\sum_{i,j=1}^d w_iw_j \rho_{ij}\,\delta_t^i\delta_t^j\,S^{i}_{t-} S^j_{t-}\right)^{1/2}}\,t.
\]
Define
\begin{small}
\[\Theta_t=
\left(\begin{array}{cccc}
 1& \cdots & \rho_{1,d-1} & \frac{\sum_{j=1}^d w_j\rho_{1j}S_{t-}^j\delta_t^j}{\left(\sum_{i,j=1}^d w_iw_j \rho_{ij}\,\delta_t^i\delta_t^j\,S^{i}_{t-} S^j_{t-}\right)^{1/2}}\\
\vdots &\ddots & \vdots &\vdots\\
\rho_{d-1,1}&\cdots &1 & \frac{\sum_{j=1}^d w_j\rho_{d-1,j}S_{t-}^j\delta_t^j}{\left(\sum_{i,j=1}^d w_iw_j \rho_{ij}\,\delta_t^i\delta_t^j\,S^{i}_{t-} S^j_{t-}\right)^{1/2}}\\
\frac{\sum_{j=1}^d w_j\rho_{1,j}S_{t-}^j\delta_t^j}{\left(\sum_{i,j=1}^d w_iw_j \rho_{ij}\,\delta_t^i\delta_t^j\,S^{i}_{t-} S^j_{t-}\right)^{1/2}}
&\cdots &\frac{\sum_{j=1}^d w_j\rho_{d-1,j}S_{t-}^j\delta_t^j}{\left(\sum_{i,j=1}^d w_iw_j \rho_{ij}\,\delta_t^i\delta_t^j\,S^{i}_{t-} S^j_{t-}\right)^{1/2}}
&1
\end{array}\right)
\]
\end{small}
There exists a standard Brownian motion $(Z_t)$ such that $W_t=A Z_t$ where $A$ is a $d\times d$ matrix verifying $\Theta={}^tA\,A$.
Define $X_T\equiv \left(\ln{(S_T^1)},\cdots, \ln{(S_T^{d-1})},\ln{(I_T)}\right)$;
\begin{equation*}
\delta=
\left(\begin{array}{cccc}
\delta_t^1 & \cdots & 0 & 0\\
\vdots & \ddots & \vdots & \vdots\\
 0 & \cdots & \delta_t^{d-1}& 0\\
0 & \cdots & 0 & \frac{1}{I_{t-}}\left(\sum_{i,j=1}^d w_iw_j\rho_{ij}\,\delta_t^i\delta_t^j\,S^{i}_{t-} S^j_{t-}\right)^{\frac{1}{2}}
\end{array}\right),
\end{equation*}
\begin{small}
\[\beta_t=
\left(\begin{array}{c}
r(t)-\frac{1}{2}(\delta_t^1)^2-\int  \left(e^{y_1}-1-y_1\right)\,\nu(dy)\\
  \vdots\\
r(t)-\frac{1}{2}(\delta_t^{d-1})^2-\int  \left(e^{y_{d-1}}-1-y_{d-1}\right)\,\nu(dy)\\
r(t)-\frac{1}{2I_{t-}^2}\sum_{i,j=1}^d w_iw_j\rho_{ij}\,\delta_t^i\delta_t^j\,S^{i}_{t-} S^j_{t-}-\int
\left(\frac{\sum_{1\leq i\leq d} w_iS_{t-}^i e^{y_i}}{I_{t-}}-1-\ln{\left(\frac{\sum_{1\leq i\leq d} w_iS_{t-}^i e^{y_i}}{I_{t-}}\right)}\right)\,\nu(dy)
\end{array}\right),
\]
\end{small}
\[
{\rm and}\qquad \psi_t(y)=
\left(\begin{array}{c}
  y_1\\
  \vdots\\
  y_{d-1}\\
  \ln{\left(\frac{\sum_{1\leq i\leq d} w_iS_{t-}^i e^{y_i}}{I_{t-}}\right)}
\end{array}\right).
\]
Then $X_T$ may be expressed as
\begin{equation}
  X_T=X_0+\int_0^T \beta_t\,dt+\int_0^T \delta_tA\,dZ_t +\int_0^T\int_{\mathbb{R}^d} \psi_t(y)\,\tilde{N}(dt\:dy)
\end{equation}
The predictable function  $\phi_t$ defined, for $t \in [0,T], y\in \psi_t(\mathbb{R}^d)$, by
$$
\phi_t(y)=\left(y_1,\cdots,y_{d-1},\ln{\left(\frac{e^{y_d}I_{t-}-\sum_{1\leq i\leq d-1}w_i S_{t-}^i e^{y_i}}{w_d S_{t-}^d}\right)}\right)
$$
  is the left inverse   of $\psi_t$: $\phi_t(\omega,\psi_t(\omega, y) )= y.$
Observe that $\psi_t(.,0)=0$, $\phi$ is predictable, and $\phi_t(\omega, .)$ is differentiable on $Im(\psi_t)$ with
Jacobian matrix $\nabla_y\phi_t(y)$ given by
\begin{small}
\begin{equation*}
(\nabla_y\phi_t(y))=\left(
\begin{array}{cccc}
1 & 0 &0 & 0\\
\vdots & \ddots & \vdots& \vdots\\
0& \cdots & 1 & 0\\
\frac{-e^{y_1}w_1S_{t-}^1}{e^{y_d}I_{t-}-\sum_{1\leq i\leq d-1}w_i S_{t-}^i e^{y_i}}  & \cdots & \frac{-e^{y_{d-1}}w_{d-1}S_{t-}^{d-1}}{e^{y_d}I_{t-}-\sum_{1\leq i\leq d-1}w_i S_{t-}^i e^{y_i}}& \frac{e^{y_d}I_{t-}}{e^{y_d}I_{t-}-\sum_{1\leq i\leq d-1}w_i S_{t-}^i e^{y_i}}
\end{array}
\right)
\end{equation*}
\end{small}
so $(\psi,\nu)$ satisfies the assumptions of \cite[Lemma 2]{bentatacont09}: using Assumption $(A_{2b})$, for all $T\geq t \geq 0$,
\begin{eqnarray}
&&\mathbb{E}\left[\int_0^T\int_{\mathbb{R}^d} (1\wedge \|\psi_t(.,y)\|^2)\,\nu(dy)\,dt\right]\nonumber\\
&=& \int_0^T\int_{\mathbb{R}^d} 1\wedge \left(y_1^2+\cdots+y_{d-1}^2+\ln{\left(\frac{\sum_{1\leq i\leq d} w_iS_{t-}^i e^{y_i}}{I_{t-}}\right)}^2\right)\,\nu(dy)\,dt \nonumber\\
&\leq& \int_0^T\int_{\mathbb{R}^d} 1\wedge (2\|y\|^2)\,\nu(dy)\,dt
<\infty.\nonumber
\end{eqnarray}
Define $\nu_{\phi}$, the image of $\nu$ by $\phi$ by
\begin{equation}
 \nu_{\phi}(\omega,t,B)=\nu(\phi_t(\omega,B))\quad {\rm for}\quad  B\subset \psi_t(\mathbb{R}^d).
\end{equation}
Applying   \cite[Lemma 2]{bentatacont09}, $X_T$ may be expressed as
$$
X_T= X_0+\int_0^T \beta_t\,dt+ \int_0^T\delta_tA\,dZ_t+\int_0^T\int y\,\tilde{M}(dt\:dy)
$$
where $M$ is an integer-valued random measure (resp. $\tilde{M}$ its compensated random measure) with compensator
$$ \mu(\omega; dt\ dy)=m(t,dy;\omega)\,dt,$$
defined via its density
\begin{eqnarray*}
\frac{d\mu}{d\nu_{\phi}}(\omega,t,y)=1_{\{\psi_t(\mathbb{R}^d)\}}(y)\,|{\rm det}\nabla_y \phi_t|(y)
=1_{\{\psi_t(\mathbb{R}^d)\}}(y)\,\left|\frac{e^{y_d}I_{t-}}{e^{y_d}I_{t-}-\sum_{1\leq i\leq d-1}w_i S_{t-}^i e^{y_i}}\right|
\end{eqnarray*} with respect to $\nu_{\phi}$.
Considering now the d-th component of $X_T$, one obtains the semimartingale decomposition of $\ln{(I_t)}$:
\begin{eqnarray*}
&&\ln{(I_T)}- \ln{(I_0)}\\
&=&\int_0^T \Big(r(t)-\frac{1}{2I_{t-}^2}\left(\sum_{i,j=1}^d w_iw_j\rho_{ij}\,\delta_t^i\delta_t^j\,S^{i}_{t-} S^j_{t-}\right)\nonumber\\
&&\quad\quad\quad-\int \left(\frac{\sum_{1\leq i\leq d} w_iS_{t-}^i e^{y_i}}{I_{t-}}-1-\ln{\left(\frac{\sum_{1\leq i\leq d} w_iS_{t-}^i e^{y_i}}{I_{t-}}\right)}\right)\,\nu(dy)\Big)\,dt\nonumber\\
&+& \int_0^T\frac{1}{I_{t-}} \left(\sum_{i,j=1}^d w_iw_j\rho_{ij}\,\delta_t^i\delta_t^j\,S^{i}_{t-} S^j_{t-}\right)^{\frac{1}{2}}\,dB_t  +\int_0^T\int y\,\tilde{K}(dt\:dy)\nonumber
\end{eqnarray*}
where $K$ is an integer-valued random measure on
$[0,T]\times\mathbb{R}$ with compensator $k(t,dy)\,dt$ where
\begin{eqnarray*}
k(t,B)&=&  \int_{\mathbb{R}^{d-1}\times B}\mu(t,dy)=\int_{\mathbb{R}^{d-1}\times B} 1_{\{\psi_t(\mathbb{R}^d)\}}(y)\,|{\rm det}\nabla_y \phi_t|(y)\,\nu_{\phi}(t,dy)\\
&=&\int_{\mathbb{R}^{d-1}\times B\cap\psi_t(\mathbb{R}^d)} |{\rm det}\nabla_y \phi_t|(\psi_t(y))\,\nu(dy)\\
&=&\int_{\{y\in\mathbb{R}^d-\{0\}, \ln{\left(\frac{\sum_{1\leq i\leq d-1} w_iS_{t-}^i e^{y_i}}{I_{t-}}\right)}\in B\}}\,\nu(dy)\quad{\rm for}\quad B\in\mathcal{B}(\mathbb{R}-\{0\}).
\end{eqnarray*}
In particular,  the exponential double tail of $k(t,dy)$ which we
denote $\eta_t(z)$
\begin{equation*}
  \eta_t(z)=
  \begin{cases}
&\int_{-\infty}^z dx\  e^x  k(t,]-\infty,x]), \quad z<0\ ;\\
&\int_{z}^{+\infty} dx\  e^x k(t,[x,\infty[),\quad z>0,
\end{cases}
\end{equation*}
is given by \eqref{ch3.def.eta}. So finally $I_T$ may be expressed as
\begin{eqnarray*}
I_T&=&I_0+\int_0^Tr(t)I_{t-} \,dt + \int_0^T\left(\sum_{i,j=1}^d w_iw_j\rho_{ij}\,\delta_t^i\delta_t^j\,S^{i}_{t-} S^j_{t-}\right)^{\frac{1}{2}}\,dB_t\\
&+& \int_0^T\int_{\mathbb{R}^d}  \left(e^{y}-1\right)I_{t-}\tilde{K}(dt\,dy).
\end{eqnarray*}
The normalized volatility of $I_t$ satisfies, for $t\in [0,T]$,
$$
\frac{\sum_{i,j=1}^d w_iw_j \rho_{ij}\,\delta_t^i\delta_t^j\,S^{i}_{t-} S^j_{t-}}{I_{t-}^2}
\leq \sum_{i,j=1}^d \rho_{ij}\,\delta_t^i\delta_t^j,\quad {\rm and}\quad \left|\ln{\left(\frac{\sum_{1\leq i\leq d} w_iS_{t-}^i e^{y_i}}{I_{t-}}\right)}\right|\leq \|y\|.$$
Hence
\begin{eqnarray}
&&\frac{1}{2}\int_0^T \frac{\sum_{i,j=1}^d w_iw_j \rho_{ij}\,\delta_t^i\delta_t^j\,S^{i}_{t-} S^j_{t-}}{I_{t-}^2}\,dt+ \int_0^T  \int (e^y-1)^2 k(t,\,dy)\,dt\nonumber\\
&=&\frac{1}{2}\int_0^T \frac{\sum_{i,j=1}^d w_iw_j \rho_{ij}\,\delta_t^i\delta_t^j\,S^{i}_{t-} S^j_{t-}}{I_{t-}^2}\,dt\nonumber\\
&+&\int_0^T \int_{\mathbb{R}^d} \left(\frac{\sum_{1\leq i\leq d-1} w_iS_{t-}^i e^{y_i}+w_dS_{t-}^d e^{y}}{I_{t-}}-1\right)^2 \nu(dy_1,\cdots,dy_{d-1},dy)\,dt\nonumber\\
&\leq& \frac{1}{2}\sum_{i,j=1}^d \rho_{ij}\,\delta_t^i\delta_t^j+ \int_0^T \int_{\mathbb{R}^d} (e^{\|y\|}-1)^2 \nu(dy_1,\cdots,dy_{d-1},dy)\,dt.\nonumber
\end{eqnarray}
Using assumptions \eqref{ch3.assum.tris}, the last inequality implies that $I_t$ satisfies \eqref{ch3.H}. Hence Theorem \ref{ch3.pide.forward.prop} can now be applied to $I$, which yields the result.
\end{proof}

\subsection{Forward equations for CDO pricing}\label{ch3.forwardcdo.sec}
Portfolio credit derivatives such as CDOs or index default swaps are derivatives whose payoff depends on the total loss $L_t$ due to defaults in a reference portfolio of obligors.
Reduced-form top-down models of portfolio default risk \cite{filipovic09,giesecke08,schonbucher05,contminca08,spa} represent the default losses of a portfolio as a {\it marked point process} $(L_t)_{t\geq 0}$ where the jump times
represents credit events in the portfolio and the jump sizes $\Delta L_t$ represent the portfolio loss upon a default event.
Marked point processes with random intensities  are increasingly used as ingredients in such models \cite{filipovic09,giesecke08,lopatin08,schonbucher05,spa}.
In all such models the loss process (represented as a fraction of the portfolio notional) may be represented as
$$ L_t = \int_0^t\int_{0}^1 x \,M(ds\, dx),  $$ where
 $M(dt\,dx)$ is an integer-valued random measure with compensator
$$\mu(dt\,dx;\omega)=m(t,dx; \omega)\,dt.$$
If furthermore
\begin{equation}
\int_0^1 x\,m(t,dx)<\infty,
\end{equation}
then  $L_t$ may be expressed in the form
$$ L_t = \int_0^t\int_{0}^1 x \,\left(m(s,dx)\,ds+\tilde{M}(ds\, dx)\right) , $$ where
$$ \int_0^t\int_{0}^1 x \,\tilde{M}(ds\, dx),$$ is a $\mathbb{P}$-martingale.
The point process $ N_t= M([0,t]\times [0,1])$ represents the number of defaults and
$$
 \lambda_t(\omega)=\int_{0}^1 m(t,dx;\omega)
$$
  represents the default intensity. Denote by
$T_1\leq T_2\leq ..$ the jump times of $N$. The cumulative loss process   $L$ may also be represented as
 $$ L_t = \sum_{k=1}^{N_t} Z_k, $$
where the ``mark" $Z_k$, with values in $[0,1]$, is distributed according to
$$
F_t(dx ;\omega)=\frac{m_X(t,dx;\omega)}{\lambda_t(\omega)}.
$$
Note that the percentage loss $L_t$ belongs to $[0,1]$, so  $\Delta L_{t}\in[0,1-L_{t-}]$.
For the equity tranche $[0,K]$, we define the expected tranche notional at maturity $T$ as
\begin{eqnarray}\label{ch3.def.cdo}
C_{t_0}(T,K)=\mathbb{E}[ (K-L_T)_+ |{\cal F}_{t_0}].
\end{eqnarray}
As noted in \cite{contminca08}, the prices of portfolio credit derivatives such as CDO tranches only depend on the loss process through the expected tranche notionals. Therefore, if one is able to compute $C_{t_0}(T,K)$ then one is able to compute the values of all CDO tranches  at date $t_0$.  In the case of a loss process with constant loss increment, Cont and Savescu \cite{contsavescu08} derived a forward equation for the expected tranche notional.
The following result generalizes the forward equation derived by Cont and Savescu \cite{contsavescu08} to a more general setting which allows for random, dependent loss sizes and possible dependence between the loss given default and the default intensity:
\begin{proposition}[Forward equation for expected tranche notionals]\label{ch3.pide.cdo.prop}
Assume there exists a measurable function $m_Y:[0,T]\times[0,1]\mapsto \mathcal{R}([0,1])$ such that for all $t\in[t_0,T]$ and for all $A\in\mathcal{B}([0,1)]$,
\begin{equation}
m_Y(t,A,L_{t-})=E[m_X(t,A,.)|L_{t-}],
\end{equation}
and denote $M_Y(dt\,dy)$ the integer-valued random measure with compensator $ m_Y(t,dy,z)\,dt$. Define the effective default intensity 
\begin{equation}
\lambda^Y(t,z)= \int_0^{1-z} m_Y(t,dy,z).
\end{equation}
Then the expected tranche notional $(T,K)\mapsto C_{t_0}(T,K)$, as a function of maturity and
strike, is a solution  of the partial
integro-differential equation
\begin{equation}\label{ch3.forward.pide.cdo}
\begin{split}
&\frac{\partial C_{t_0}}{\partial T}(T,K)\\
&\quad=-\int_{0}^K \frac{\partial^2 C_{t_0}}{\partial K^2}(T,dy)\,\left[\int_{0}^{K-y}(K-y-z)\,m_Y(T,dz,y)-(K-y)\lambda^Y(T,y)\right],
\end{split}
\end{equation}
on $[t_0,\infty[\times ]0,1[$ with
the initial condition:\ $\forall K\in [0,1],\quad C_{t_0}(t_0,K)= (K-L_{t_0})_+.$
\end{proposition}
\begin{proof}
By replacing $\mathbb{P}$ by the conditional measure ${\mathbb{P}}_{|\mathcal{F}_0}$  given $\mathcal{F}_0$, we may
replace the conditional expectation in (\ref{ch3.def.cdo}) by an expectation with
respect to the marginal distribution $p_T(dy)$ of $L_T$ under ${\mathbb{P}}_{|\mathcal{F}_{t_0}}$.
Thus, without loss of generality, we put $t_0=0$ in the sequel and consider the case where ${\mathcal{F}_0}$ is the $\sigma$-algebra generated by all $\mathbb{P}$-null sets.
(\ref{ch3.def.cdo}) can be expressed as
\begin{equation}\label{ch3.def.cdo.bis}
C(T,K)=\int_{\mathbb{R}^+} \left(K-y\right)^+\,p_{T}(dy).
\end{equation}
Differentiating with respect to $K$, we get
\begin{equation}
   \frac{\partial C}{\partial K}=\int_0^K p_{T}(dy)=\mathbb{E}\left[1_{ \{L_{t-}\leq K\}}\right],\quad\quad \frac{\partial^2 C}{\partial K^2}(T,dy)=p_{T}(dy).
\end{equation}
For  $h>0$ applying the Tanaka-Meyer
formula  to $(K-{L}_t)^+$ between $T$ and $T+h$, we have
\begin{equation}
\begin{split}
  (K-L_{T+h})^+&=(K-L_T)^+ -\int_T^{T+h} 1_{ \{L_{t-}\leq K\}} dL_t \\
  &+ \sum_{T< t\leq T+h} \left[(K-L_{t})^+-(K-L_{t-})^++1_{ \{L_{t-}\leq K\}}\Delta L_{t}\right].
\end{split}
\end{equation}

Taking expectations, we get
 \begin{eqnarray*}
    C(T+h,K)-C(T,K)&=&\mathbb{E}\left[\int_T^{T+h} dt\,1_{\{ L_{t-}\leq K\}}\,\int_0^{1-L_{t-}} x\,m(t,dx)\right]  \nonumber\\
    &+&   \mathbb{E}\left[ \sum_{T< t\leq T+h} (K-L_t)^+-(K-L_{t-})^+
+1_{\{ L_{t-}\leq  K\}}\Delta L_{t}\right].\nonumber
  \end{eqnarray*}
The first term may be computed as
\begin{eqnarray*}
\mathbb{E}\left[\int_T^{T+h} dt\,1_{\{ L_{t-}\leq K\}}\,\int_0^{1-L_{t-}} x\,m(t,dx)\right]
&=&\int_T^{T+h} dt\,\mathbb{E}\left[1_{\{ L_{t-}\leq K\}}\,\int_0^{1-L_{t-}} x\,m(t,dx)\right]\\
&=&\int_T^{T+h} dt\,\mathbb{E}\left[\mathbb{E}\left[1_{\{ L_{t-}\leq K\}}\,\int_0^{1-L_{t-}} x\,m(t,dx)\Big|L_{t-}\right]\right]\\
&=&\int_T^{T+h} dt\,\mathbb{E}\left[1_{\{ L_{t-}\leq K\}}\int_0^{1-L_{t-}} x\,m_Y(t,dx,L_{t-})\right]\\
&=&\int_T^{T+h} dt\,\int_0^K p_T(dy)\left(\int_0^{1-y} x\,m_Y(t,dx,y)\right).
\end{eqnarray*}
As for the jump term,
\begin{eqnarray*}
  &&\mathbb{E}\left[\sum_{T< t\leq T+h} (K-L_{t})^+-(K-L_{t-})^++1_{ \{L_{t-}\leq K\}} \Delta L_{t}\right] \nonumber\\
  &=& \mathbb{E}\left[\int_T^{T+h}dt\int_0^{1-L_{t-}} m(t,dx)\,\left((K-L_{t-}-x)^+-(K-L_{t-})^++1_{\{ L_{t-}\leq K\}}x\right)\right]\nonumber\\
  &=&\int_T^{T+h}dt\,\mathbb{E}\left[\int_0^{1-L_{t-}} \,m(t,dx)\left((K-L_{t-}-x)^+-(K-L_{t-})^++1_{\{ L_{t-}\leq K\}}x\right)\right]\nonumber\\
&=&\int_T^{T+h}dt\,\mathbb{E}\left[\mathbb{E}\left[\int_0^{1-L_{t-}} \,m(t,dx)\left((K-L_{t-}-x)^+-(K-L_{t-})^++1_{\{ L_{t-}\leq K\}}x\right)\Big|L_{t-}\right]\right]\\
&=&\int_T^{T+h}dt\,\mathbb{E}\left[\int_0^{1-L_{t-}} \,m_Y(t,dx,L_{t-})\left((K-L_{t-}-x)^+-(K-L_{t-})^++1_{\{ L_{t-}\leq K\}}x\right)\right]\\
&=&\int_T^{T+h} dt\,\int_0^1 p_T(dy)\int_0^{1-y} \,m_Y(t,dx,y)\left((K-y-x)^+-(K-y)^++1_{\{y\leq K\}}x\right),
\end{eqnarray*}
where the inner integrals may be computed as
\begin{eqnarray*}
&&\int_0^1 p_T(dy)\int_0^{1-y} \,m_Y(t,dx,y)\left((K-y-x)^+-(K-y)^++1_{\{y\leq K\}}x\right)\\
&=&\int_0^K p_T(dy)\int_0^{1-y} \,m_Y(t,dx,y)\left((K-y-x)1_{\{ K-y>x\}}-(K-y-x)\right)\\
&=&\int_0^K p_T(dy)  \int_{K-y}^{1-y}  \,m_Y(t,dx,y)(K-y-x).
\end{eqnarray*}
Gathering together all the terms, we obtain
\begin{eqnarray*}
&&C(T+h,K)-C(T,K)\\
&=&\int_T^{T+h} dt\,\int_0^K p_T(dy)\left(\int_0^{1-y} x\,m_Y(t,dx,y)\right)
+\int_T^{T+h} dt\,\int_0^K p_T(dy)  \left(\int_{K-y}^{1-y} \,m_Y(t,dx,y)(K-y-x)\right)\\
&=&\int_T^{T+h} dt\,\int_0^K p_T(dy)\left(-\int_{0}^{K-y}  \,m_Y(t,dx,y)(K-y-x)+(K-y)\lambda^Y(T,y)\right).
\end{eqnarray*}
Dividing by $h$ and taking the limit $h\to 0$ yields
\begin{eqnarray*}
\frac{\partial C}{\partial T}&=&-\int_{0}^K p_{T}(dy)\,\left[\int_{0}^{K-y}(K-y-x)\,m_Y(T,dx,y)-(K-y)\lambda^Y(T,y)\right]\\
&=&-\int_{0}^K\frac{\partial^2 C}{\partial K^2}(T,dy)\,\left[\int_{0}^{K-y}(K-y-x)\,m_Y(T,dx,y)-(K-y)\lambda^Y(T,y)\right].
\end{eqnarray*}
\end{proof}

In \cite{contsavescu08}, loss given default (i.e. the jump size of $L$) is assumed constant $\delta=(1-R)/n$: then $Z_k=\delta$, so $L_t=\delta N_t$ and one can compute $C(T,K)$  using the law of $N_t$. Setting $t_0=0$ and  assuming as above that $\mathcal{F}_{t_0}$ is generated by null sets, we have
\begin{eqnarray}\label{ch3.def.cdo.savescu}
C(T,K)=\mathbb{E}[(K-L_T)^+]=\mathbb{E}[(k\,\delta-L_T)^+]
=\delta\,\mathbb{E}[(k-N_T)^+]\equiv \delta \,C_k(T).
\end{eqnarray}
The compensator of $L_t$ is $\lambda_t\,\epsilon_{\delta}(dz)\,dt$, where $\epsilon_{\delta}(dz)$ is the point mass at the point $\delta$. The effective compensator becomes
$$m_Y(t,dz,y)=E[\lambda_t|L_{t-}=y]\,\epsilon_{\delta}(dz)\,dt=\lambda^Y(t,y)\,\epsilon_{\delta}(dz),$$
and the effective default intensity is
$\lambda^Y(t,y)= E[\lambda_t|L_{t-}=y].$
Using the notations in \cite{contsavescu08}, if we set $y=j\delta$ then
$$\lambda^Y(t,j\delta)= E[\lambda_t|L_{t-}=j\delta]=E[\lambda_t|N_{t-}=j]=a_j(t)$$ and $p_t(dy)=\sum_{j=0}^n q_j(t)\epsilon_{j\delta}(dy)$.
Let us focus on (\ref{ch3.forward.pide.cdo}) in this case. We recall from the proof of Proposition \ref{ch3.pide.cdo.prop} that
\begin{eqnarray*}
\frac{\partial C}{\partial T}(T,k\delta)&=&\int_{0}^1 p_{T}(dy)\,H_T.\left(k\delta-y\right)^+\nonumber\\
&=&\int_{0}^1 p_{T}(dy)\,\int_{0}^{1-y}[(k\delta-y-z)^+-(k\delta-y)^+]\,\lambda^Y(T,y)\,\epsilon_{\delta}(dz)\nonumber\\
&=&\int_{0}^1 p_{T}(dy)\,\lambda^Y(T,y)\,[(k\delta-y-\delta)^+-(k\delta-y)^+]\,1_{ \{\delta<1-y\}}\nonumber\\
&=&-\delta\sum_{j=0}^n q_j(T)\,a_j(T)\,1_{\{ j\leq k-1\}}.
\end{eqnarray*}
 This expression can be simplified as in \cite[Proposition 2]{contsavescu08}, leading
 to the forward equation
\begin{eqnarray*}
 \frac{\partial C_k(T)}{\partial T}
&=& a_k(T)C_{k-1}(T)-a_{k-1}(T)C_{k}(T)-\sum_{j=1}^{k-2} C_{j}(T)[a_{j+1}(T)-2a_j(T)+a_{j-1}(T)]\nonumber\\
&=&[a_k(T)-a_{k-1}(T)]C_{k-1}(T)-\sum_{j=1}^{k-2}(\nabla^2 a)_jC_{j}(T)-a_{k-1}(T)[C_{k}(T)-C_{k-1}(T)].\nonumber
\end{eqnarray*}
Hence we recover \cite[Proposition 2]{contsavescu08} as a special case of Proposition \ref{ch3.pide.cdo.prop}.

\def\polhk#1{\setbox0=\hbox{#1}{\ooalign{\hidewidth
  \lower1.5ex\hbox{`}\hidewidth\crcr\unhbox0}}}

\end{document}